\tikzset{
    every node/.style={
        circle,
        draw,
        fill          = black!50,
        inner sep     = 0pt,
        minimum width =4 pt
    }   
}  
\numberwithin{figure}{section}%
\numberwithin{table}{section}%
\numberwithin{equation}{section}%
  \theoremstyle{plain}%
 \newtheorem{theorem}{Theorem}[section]
 \newtheorem{lemma}[theorem]{Lemma}%
 \newtheorem{corollary}[theorem]{Corollary}
  \newtheorem{problem}{Problem}[section]%
  \newtheorem{proposition}[theorem]{Proposition}
  \newtheorem{remark}[theorem]{Remark}
\newcommand{\mvd}{MR$(K_n, \Re)$\xspace}
\newcommand{\mvid}{MR$(K_n, \Re_{\geq 0})$\xspace}
\newcommand{\gmd}{MR$(G, \Re)$\xspace}
\newcommand{\gmvd}{MR$(G, \Re)$\xspace}
\newcommand{\gmvid}{MR$(G, \Re_{\geq 0})$\xspace}
\newcommand{\mcut}{\textsf{MULTICUT}\xspace}
\newcommand{\lcut}{\textsf{LB-CUT}\xspace}
\newcommand{\csp}{\#\mathsf{sp}}
\newcommand{\deficit}{\kappa}
\newcommand{\opt}{OPT}
\renewcommand{\Re}{\mathbb{R}}%
\newcommand{\eps}{{\varepsilon}}%
\newcommand{\atgen}{\symbol{'100}}
\newcommand{\UMichThanks}[1]{\thanks{Department of Mathematics, University of Michigan - Ann Arbor, {\tt \{annacg, rsonthal\}\atgen{}umich.edu}.#1}}
\newcommand{\UTDThanks}[1]{\thanks{Department of Computer Science;
      University of Texas at Dallas; 
      {\tt \{cxf160130, benjamin.raichel, greg.vanbuskirk\}\atgen{}utdallas.edu}. #1}}
\newcommand{\myparagraph}[1]{\bigskip\noindent{\textbf{#1}}}
\newcommand{\Sym}{\text{Sym}_n(\mathbb{R}_{\ge 0})}
\DeclareMathOperator*{\argmin}{arg\,min}
\title{Generalized Metric Repair on Graphs\footnote{This paper combines and significantly extends the results from \cite{frv-mvdre-18} and \cite{gs-gmrg-18}.}}
\date{}
\author{
 Chenglin Fan\UTDThanks{Work on this paper was partially
      supported by NSF CRII Award 1566137 and CAREER Award 1750780.}
 \and 
 Anna C. Gilbert\UMichThanks{}
 \and
 Benjamin Raichel\footnotemark[2]
 \and 
 Rishi Sonthalia\footnotemark[3]
     \and
 Gregory Van Buskirk\footnotemark[2]
}
\newcommand{\remove}[1]{}
\begin{document}

\maketitle
\thispagestyle{empty}

\begin{abstract}
Many modern data analysis algorithms either assume or are considerably more efficient if the distances between the data points satisfy a metric. These algorithms include metric learning, clustering, and dimension reduction. As real data sets are noisy, distances often fail to satisfy a metric. For this reason, Gilbert and Jain~\cite{Gilbert2017} and Fan et al.~\cite{frv-mvdha-18} introduced the closely related \emph{sparse metric repair} and \emph{metric violation distance} problems. The goal of these problems is to repair as few distances as possible to ensure they satisfy a metric. Three variants were considered, one admitting a polynomial time algorithm. The other variants were shown to be APX-hard, and an $O(OPT^{1/3})$-approximation was given, where $OPT$ is the optimal solution size.

In this paper, we generalize these problems to no longer consider \emph{all} distances between the data points. That is, we consider a weighted graph $G$ with corrupted weights $w$, and our goal is to find the smallest number of weight modifications so that the resulting weighted graph distances satisfy a metric. This is a natural generalization and is more flexible as it takes into account different relationships among the data points. As in previous work, we distinguish among the types of repairs permitted and focus on the increase only and general versions. We demonstrate the inherent combinatorial structure of the problem, and give an approximation-preserving reduction from \mcut, which is hard to approximate within any constant factor assuming the Unique Games Conjecture. 
Conversely, we show that for any fixed constant $\varsigma$, for the large class of $\varsigma$-chordal graphs, the problems are fixed parameter tractable, answering an open question from previous work. 
Call a cycle \emph{broken} if it contains an edge whose weight is larger than the sum of all its other edges, and call the amount of this difference its \emph{deficit}. We present approximation algorithms, one which depends on the maximum number of edges in a broken cycle, and one which depends on the number of distinct deficit values, both quantities which may naturally be small. 
Finally, we give improved analysis of previous algorithms for complete graphs.
\end{abstract}

\newpage
\pagenumbering{arabic}



\section{Introduction}

Similarity measures that satisfy a metric are fundamental to a large number of machine learning tasks such as dimensionality reduction and clustering (see~\cite{Wang2015, Baraty2011} for two examples). However, due to noise, missing data, and other corruptions, in practice these distances often do not adhere to a metric. There are also many algorithmic settings where the underlying distances arise from a metric space or are at least well modeled by one. Such cases are fortuitous, as certain tasks become provably easier over metric data (e.g., approximating the optimal TSP tour), and moreover they allow us to use a number of computational tools such as metric embeddings. Motivated by these observations and the earlier work of Brickell et~al.\ \cite{Brickell}, Fan et~al.\ \cite{frv-mvdha-18} and Gilbert and Jain \cite{Gilbert2017} respectively formulated the \emph{Metric Violation Distance (MVD)} and the \emph{Sparse Metric Repair (SMR)} problems. Formally, the problem both sets of authors studied was: given a full distance matrix, modify as few entries as possible so that the repaired distances satisfy a metric.

To capture a more general nature of the problem, we define the \emph{Graph Metric Repair} problem as the natural graph theoretic generalization of the MVD and SMR problems: 
\begin{quote} Given a positively weighted undirected graph $G=(V,E,w)$ and a set $\Omega\subseteq \mathbb{R}$, find the smallest set of edges $S\subseteq E$ such that by modifying the weight of each edge in $S$, by adding a value from $\Omega$, the new distances satisfy a metric.\end{quote}

This additional graph structure introduced in the generalized problem lets us incorporate different types of relationships amongst data points and gives us more flexibility in its structure, and hence avails itself to be applicable to a richer class of problems. Gilbert and Sonthalia~\cite{GilbertSonthalia:SwissRoll2018} use (graph) metric repair to learn metrics and metric embeddings in data sets with missing data. Furthermore, while Gilbert and Jain \cite{Gilbert2017} showed that SMR can be approximated empirically via convex optimization, both \cite{Gilbert2017} and \cite{frv-mvdha-18} developed combinatorial algorithms based upon All Pairs Shortest Path (APSP) computations. Thus, metric repair is inherently a combinatorial problem and the generalized graph problem helps elucidate this structure. 

Graph Metric Repair is related to a large number of other previously studied problems.  
A short list includes:
metric nearness, seeking the metric minimizing the sum of distance value changes \cite{Brickell}; 
metric embedding with outliers, seeking the fewest points whose removal creates a metric \cite{sww-meo-17}; 
matrix completion, seeking to fill missing matrix entries to produce a low rank \cite{cr-emcco-12}; and many more.  
See \cite{frv-mvdha-18} for a more detailed discussion of these and other problems.  

Here we consider the deep connections to cutting problems, which underlie several results in this paper, 
and which were not previously observed in \cite{frv-mvdha-18,Gilbert2017}. 
In particular, our problem is closely related to \mcut, a generalization of the standard $s$-$t$ cut problem to multiple $s$-$t$ pairs.
\mcut has been extensively studied, both for directed and undirected graphs.
For undirected graphs, the problem captures vertex cover even when $G$ is a tree. 
Moreover, assuming the Unique Games Conjecture (UGC) there is no constant factor approximation \cite{ChawlaKKRS06}.
In general, the best known approximation factor is $O(\log k)$~\cite{GargVY96}, for $k$ terminal pairs,
which improves to an $O(r)$-approximation when $G$ excludes $K_r$ as a minor \cite{AbrahamGGNT14}.
%
%
Another closely related problem is Length Bounded Cut (\lcut), where given a value $L$ and an $s$-$t$ pair, 
the goal is to delete the minimum number of edges such that there is no path between $s$ and $t$ with length $\leq L$. 
\lcut is hard to approximate within a factor of $\Omega(\sqrt{L})$ in undirected graphs \cite{l-ihcifp-17}.

\myparagraph{Contributions and Results:}
The main contributions of this paper are as follows: 
\begin{itemize}[nosep]
\item We show the decrease only version of the problem ($\Omega = \mathbb{R}_{\le 0}$) is solvable in cubic time,  
and that if distances are allowed to increase even by a single number, the problem is NP-Complete. 
\item We provide a \emph{characterization} for the support of solutions to the increase ($\Omega = \mathbb{R}_{\ge 0}$) and general ($\Omega = \mathbb{R}$) versions of the problem. This characterization is fundamental and is the basis for the rest of our results. Furthermore, we provide a cubic time algorithm determining for any given subset of the edges whether there exists \emph{any valid solution} with that support, and finds one if it exists. 
Additionally, we show the increase only problem reduces to the general one.
\item We give polynomial-time approximation-preserving reductions from \mcut and \lcut to graph metric repair. 
This connection to the well studied \mcut problem is interesting in its own right, but also implies 
graph metric repair is NP-hard, and cannot be approximated within any constant factor assuming UGC. 
Our reduction from \lcut implies that, for any fixed $L$, the set of instances of graph metric repair with maximum edge weight $L$  (and minimum weight 1) are hard to approximate within a factor of $\Omega(\sqrt{L})$.
\item For any fixed constant $\varsigma$, by parameterizing on the size of the optimal solution, we present a \emph{fixed parameter tractable} algorithm for the case when $G$ is $\varsigma$-chordal. This not only answers an open question posed by~\cite{frv-mvdha-18} for complete graphs, 
but significantly extends it to the larger $\varsigma$-chordal case (see~\cite{Chandran2005GraphsOL} for characterizations of such graphs, many of which are the complements of a variety of families of graphs). Our FPT algorithm requires a number of new and interesting insights into the structure of the metric repair problem. Moreover, we get an upper bound on the number of optimal supports, as each one is seen by some branch of the algorithm.
\item We give several approximation algorithms, parameterized by different measures of how far the input is from a metric.
Call a cycle \emph{broken} if it contains an edge whose weight is larger than the sum of all its other edges, and call the amount of this difference its \emph{deficit}.
First, we argue that our characterization of optimal supports directly implies an $L$-approximation, where $L+1$ is the largest number of edges in a broken cycle.
Next, by analyzing the structure of the problem more carefully, we give an $O(\deficit \log n)$-approximation, where $\kappa$ is the number of distinct positive cycle deficit values.
While in general $\deficit$ may be large, when it is small it still allows for graphs with large chordless broken cycles, a case not handled by our other algorithms.
Significantly, our approximations mirror our hardness results. 
We give an $L$-approximation, while $\lcut$ gives $\Omega(\sqrt{L})$-hardness.
We give an $O(\deficit \log n)$-approximation, while in general the best known approximation for \mcut is $O(\log n)$.
\item Finally, we give improved analysis of previous algorithms for the complete graph case.
To keep the focus on our main results, this entire section has been moved to Appendix \ref{apnd:completeGraphs}.
\end{itemize}


\section{Preliminaries}\label{sec:prelim}

\subsection{Notation and problem definition}

Let us start by defining some terminology.
Throughout the paper, the input is an undirected and weighted graph $G=(V,E,w)$. 
A subgraph $C=(V',E')$ is called a $k$-cycle if $|V'|=|E'|=k$, and the subgraph is connected with every vertex having degree exactly $2$. 
We often overload this notation and use $C$ to denote either the cyclically ordered list of vertices or edges from this subgraph. 
Let $C\setminus e$ denote the set of edges of $C$ after removing the edge $e$, and $\pi(C\setminus e)$ denote the corresponding induced path between the endpoints of $e$.

A cycle $C$ is \textbf{broken} if there exists an edge $h\in C$ such that 
\[ w(h) > \sum_{e \in C \backslash h} w(e) \] 
In this case, we call the edge $h$ the \textbf{heavy} edge of $C$, and all other edges of $C$ are called \textbf{light} edges. 
We call a set of edges a \textbf{light cover} if it contains at least one light edge from each broken cycle. 
Similarly, we call it a \textbf{regular cover} if it contains at least one edge from each broken cycle.
We say that a weighted graph $G = (V,E,w)$ \textbf{satisfies a metric} if there are no broken cycles. 
Finally, let {\rm Sym}$_n(\Omega)$ be the set of $n \times n$ symmetric matrices with entries drawn from $\Omega\subseteq \mathbb{R}$. 
Note that the weight function $w$ can be viewed as an $n \times n$ symmetric matrix (missing edges get weight $\infty$), and thus for any $W\in {\rm Sym}_n (\Omega)$, the matrix sum $w+W$ defines a new weight function.
Now we can define the generalized graph metric repair problem as follows. In the following, $\|W\|_0$ is the number of non-zero entries in the matrix $W$, 
i.e., the $\ell_0$ pseudonorm when viewing the matrix $W$ as a vector. 

\begin{problem} \label{prob:metric} 
Given a set $\Omega\subseteq \mathbb{R}$ and a positively weighted graph $G = (V,E,w)$ we want to find
\[ 
	\underset{W \in {\rm Sym}(\Omega)}{\argmin}\|W\|_0 \text{ such that } G = (V,E,w+W) \text{ satisfies a metric, or return NONE,} 
\] 
if no such $W$ exists. 
Denote this problem as graph metric repair or MR($G, \Omega$). 
\end{problem}

A matrix $W$ is an \emph{optimal solution} if it realizes the $\arg \min$ in the above, and is a \emph{solution} (without the \emph{optimal} prefix) if $G = (V,E,w+W)$ satisfies a metric, but $\|W\|_0$ is not required to be minimum.
The \textbf{support} of a matrix $W \in {\rm Sym}(\Omega)$, denoted $S_W$, is the set of edges corresponding to non-zero entries in $W$.
As we will see in Proposition \ref{prop:verifier}, given a support for a solution $W$, we can easily find satisfying entries.
Thus, the main difficulty lies in finding the support.
Throughout we use $\opt$ to denote the size of the support of an optimal solution.

We also need the following basic graph theory definitions: $K_n$ is the complete graph on $n$ vertices. $C_n$ is the cycle $n$ vertices. A \textbf{chord} of a cycle is an edge connecting two non-adjacent vertices. 
For a given value $\varsigma$, a graph $G$ is called a $\varsigma$-\textbf{chordal} if the size of the largest chordless cycle in $G$ is $\leq \varsigma$. 

Let the \textbf{deficit} of a broken cycle $C$, denoted $\delta(C)$, be the weight of its heavy edge minus the sum of the weights of all other edges in $C$.  
Similarly, $\delta(G)$ denotes the maximum of $\delta(C)$ over all broken cycles.
Finally, let $L+1$ be the maximum number of edges in a broken cycle (i.e., $L$ counts the light edges). 
Note $\delta$ and $L$ are both parameters measuring the extent to which cycles are broken, $\delta$ with respect to the weights and $L$ with respect to the number of edges.

\subsection{Previous results}

Fan et al.~\cite{frv-mvdha-18} and Gilbert and Jain~\cite{Gilbert2017} studied the special case of MR$(G, \Omega)$ where $G = K_n$.
Three sub-cases based on $\Omega$ were considered, namely $\Omega = \mathbb{R}_{\le 0}$ (decrease only), $\mathbb{R}_{\ge 0}$ (increase only), and $\mathbb{R}$ (general). 
Various structural, hardness, and algorithmic results were presented for these cases.
In particular, the major results from these previous works are as follows. (Note the notation and terminology here differs slightly from \cite{frv-mvdha-18, Gilbert2017}.)

\begin{theorem} \cite{frv-mvdha-18, Gilbert2017} \label{thm:specialDecrease} The problem MR$(K_n, \mathbb{R}_{\le 0})$ can be solved in cubic time. 
\end{theorem}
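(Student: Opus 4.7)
The plan is to show that there is a unique optimal support, namely the set of edges whose current weight exceeds the shortest-path distance between their endpoints in the input graph, and that this support (together with the right replacement values) can be computed in $O(n^3)$ time via APSP.

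First I would compute the all-pairs shortest path distances $d(u,v)$ in the input graph $G=(V,E,w)$ using Floyd--Warshall; this is the only $\Omega(n^3)$ step. The proposed algorithm then outputs $W$ defined by $W(uv) = d(u,v) - w(uv)$ whenever $d(u,v) < w(uv)$, and $W(uv) = 0$ otherwise. Note that $W \in \mathrm{Sym}_n(\mathbb{R}_{\le 0})$ since $d(u,v) \le w(uv)$ always holds for non-negative edge weights on a complete graph.

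For feasibility, the modified weight $w'(uv) := w(uv) + W(uv) = \min(w(uv), d(u,v))$ is just the shortest-path distance $d(u,v)$ on every edge, so $w'$ coincides with a shortest-path metric and therefore satisfies all triangle inequalities; equivalently, there are no broken triangles (and on $K_n$ the triangle inequality is the only constraint).

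The main step, and the one that needs the most care, is optimality. Let $W'$ be any feasible decrease-only solution with modified weights $w'' = w + W'$, and let $d''$ denote shortest-path distances under $w''$. Because $W'$ only decreases weights, every path length can only go down, so $d''(u,v) \le d(u,v)$. Feasibility forces $w''(uv) \le d''(u,v)$ for each edge $uv$ (otherwise the path realizing $d''(u,v)$ together with $uv$ gives a broken triangle in $K_n$, after picking any intermediate vertex on that path). Combining, $w''(uv) \le d(u,v)$. Hence for every edge $uv$ with $d(u,v) < w(uv)$, we must have $W'(uv) \neq 0$, so the support of $W'$ contains $\{uv : d(u,v) < w(uv)\}$, which is exactly the support produced by the algorithm. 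Thus the algorithm's output is optimal, and the total running time is $O(n^3)$. The only real subtlety is realizing that in the decrease-only regime there is no trade-off between modifying $uv$ directly and modifying other edges to ``save'' it, since decreasing other edges can only shrink alternative $u$-$v$ paths further.
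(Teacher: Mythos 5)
Your proposal is correct and follows essentially the same route as the paper's \textsc{Dmr} algorithm: compute all-pairs shortest paths, replace each edge weight by the shortest-path distance between its endpoints, and observe that every edge with $w(uv) > d(u,v)$ is the heavy edge of a broken cycle and hence must lie in the support of \emph{any} decrease-only solution, so the produced support is both feasible and minimal. Your extra step showing $d'' \le d$ under any decrease-only modification is a slightly more explicit way of making the paper's same point that there is no trade-off to exploit.
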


\begin{theorem} \cite{frv-mvdha-18} \label{thm:specialStrctr} For a complete positively weighted graph $K_n= (V,E,w)$ and $S \subseteq E$ we have: 
\begin{enumerate}[nosep]
\item $S$ is a regular cover if and only if $S$ is the support to a solution to MR$(K_n, \mathbb{R})$. 
\item $S$ is a light cover if and only if $S$ is the support to a solution to MR$(K_n, \mathbb{R}_{\ge 0})$.
\end{enumerate} 
\end{theorem}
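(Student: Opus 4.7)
The plan is to prove the two biconditionals by handling the four implications separately. I would first dispose of the forward directions ($S$ is the support of a solution $\Rightarrow$ $S$ is a cover) by a contrapositive argument. For~(1), if $S$ were not a regular cover, some broken cycle $C$ would satisfy $E(C) \cap S = \emptyset$, so any $W \in {\rm Sym}(\mathbb{R})$ with $S_W \subseteq S$ leaves the weights on $C$ unchanged and $C$ stays broken in $w+W$, violating the metric property. For~(2), if $S$ were not a light cover, some broken cycle $C$ would contain no light edge of $S$; since $W \in {\rm Sym}(\mathbb{R}_{\ge 0})$ only increases weights, the light edges of $C$ stay the same while its heavy edge can only grow, so $C$ remains broken.

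For the reverse directions, I would work with the subgraph $G^- := (V,\, E \setminus S,\, w|_{E \setminus S})$ and establish the following key structural observation: if $S$ is a regular cover then $G^-$ contains no broken cycle, because such a cycle would be a broken cycle of $G$ whose edges all lie outside $S$. It would follow that $G^-$ is itself a metric graph, and in particular, for every $(u,v) \in E \setminus S$ the direct edge is a shortest $u$-$v$ path in $G^-$, so $w(u,v) = d_{G^-}(u,v)$. I would then define
\[
w'(e) \;:=\; \begin{cases} d_{G^-}(u,v) & \text{if } e = (u,v) \in S, \\ w(e) & \text{if } e \notin S, \end{cases}
\]
and observe that, combined with the previous equality, this gives $w'(e) = d_{G^-}(u,v)$ for \emph{every} edge $e = (u,v)$ of $K_n$. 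Since $d_{G^-}$ satisfies the triangle inequality on $V$, so does $w'$, and hence $w'$ satisfies a metric. Setting $W := w' - w$ then yields a solution of MR$(K_n, \mathbb{R})$ with $S_W \subseteq S$, proving~(1).

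For~(2) I would additionally verify that $W \in {\rm Sym}(\mathbb{R}_{\ge 0})$, i.e.\ $w'(e) \ge w(e)$ for every $e \in S$. If instead $w(u,v) > d_{G^-}(u,v)$ for some $e = (u,v) \in S$, then $e$ together with a shortest $u$-$v$ path in $G^-$ would form a broken cycle whose heavy edge is $e$ and whose light edges all lie in $E \setminus S$, contradicting the light-cover hypothesis. The hard part of the proof is this structural bridge from the combinatorial cover condition to the metric realization of $w'$; the ``no broken cycle in $G^-$'' observation is precisely the needed bridge, after which the shortest-path extension is automatic. A minor technicality I would address is that $G^-$ may be disconnected (when $S$ contains all edges across some vertex bipartition), which I would handle by applying the construction within each connected component and assigning any sufficiently large constant to cross-component entries of $S$, preserving both the triangle inequality and the lower bound $w' \ge w$.
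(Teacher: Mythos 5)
Your proposal is correct and follows essentially the same route as the paper's proof of the generalization (Theorem~\ref{thm:structure}): the forward directions by the same contrapositive, and the reverse directions by deleting $S$, observing the remainder has no broken cycles so that direct edges realize shortest paths, resetting $S$-edges to shortest-path distances (with a large constant across components), and deriving the same light-cover contradiction to get $w' \ge w$ in the increase-only case. The only cosmetic difference is that you verify the metric property by noting $w'$ coincides with the shortest-path metric $d_{G^-}$ on all of $K_n$, whereas the paper argues edge-by-edge that no edge can be heavy.
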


%

\begin{theorem} \cite{frv-mvdha-18,Gilbert2017} \label{thm:oracle}
Given the support $S$ of a solution to either MR$(K_n, \mathbb{R}_{\ge 0})$ or MR$(K_n, \mathbb{R})$, in polynomial time one can find a weight assignment to the edges in $S$ which is a solution.

\cite{Gilbert2017} Moreover, for MR$(K_n, \mathbb{R}_{\ge 0})$, if $K_n\!-\!S$ is connected, then for any edge $uv \in S$, setting the weight of $uv$ to be the shortest distance between $u$ and $v$ in $K_n\!-\!S$ is a solution.
\end{theorem}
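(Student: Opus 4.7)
The plan is to prove the ``moreover'' portion first, since it supplies an explicit formula and most of the structural insight needed for the general claim, and then to adapt the construction to handle the remaining cases (disconnected $K_n - S$ and $\Omega = \mathbb{R}$).

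For the moreover statement, assume $K_n - S$ is connected and define $w'(uv) := d_{K_n-S}(u,v)$ for every $uv \in S$, leaving weights on $K_n - S$ unchanged. Since $K_n$ is complete, showing $(K_n, w')$ satisfies a metric reduces to checking the triangle inequality on every triangle. The crux is the following lemma, proved by contradiction: for every pair $u, v \in V$, $w(uv) \le d_{K_n-S}(u,v)$. Indeed, if some $u$--$v$ path $P$ in $K_n - S$ had total weight strictly less than $w(uv)$, then the cycle $C = uv \cup P$ would be broken with $uv$ as its unique heavy edge; but $P \subseteq K_n - S$, so $S$ contains no light edge of $C$, contradicting the light-cover characterization in Theorem~\ref{thm:specialStrctr}. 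Combined with the trivial opposite bound $w'(ab) \ge d_{K_n-S}(a,b)$ for every pair $ab$ (an equality for $ab \in S$, and following from $d_{K_n-S}(a,b) \le w(ab)$ for $ab \notin S$), we obtain $w'(ab) = d_{K_n-S}(a,b)$ throughout, so the triangle inequality for $w'$ reduces to the triangle inequality for a shortest-path metric. In particular $w' \ge w$, so this is a valid increase-only solution, and it is computable in cubic time via all-pairs shortest paths on $K_n - S$.

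For the general part~1, the same reasoning carries over once we handle the possibility that $K_n - S$ is disconnected. For $\Omega = \mathbb{R}$ the support $S$ is a regular cover, while for $\Omega = \mathbb{R}_{\ge 0}$ it is a light cover, by Theorem~\ref{thm:specialStrctr}; in either case $K_n - S$ itself contains no broken cycle (as any such cycle uses no edge of $S$), so the shortest path distance $d_A$ is a bona-fide metric within each connected component $A$. For $uv \in S$ with both endpoints in a common component $A$, we again set $w'(uv) = d_A(u,v)$. For cross-component edges, we fix a root $r_A$ in each component and define $w'(uv) := d_A(u, r_A) + D + d_B(r_B, v)$ for $u \in A$, $v \in B$, where $D$ is chosen large enough to exceed every original cross-component edge weight (ensuring $w' \ge w$ in the increase-only case). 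Triangle inequality then follows from a short case analysis on how the three vertices of a given triangle are distributed among the components.

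The main obstacle I anticipate is the cross-component step: we have no analogue of the ``shortest path in $K_n - S$'' formula to guide us, so the choice of $D$ and the per-component roots must be made carefully enough to preserve triangle inequality across every combinatorial split of a triangle's vertices among components and, in the increase-only case, to maintain $w' \ge w$ on the cross-component edges. Once the rooted assignment above is in place, however, the verification reduces to a routine case check, and the overall algorithm remains cubic time.
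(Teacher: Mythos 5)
Your proof is correct. The ``moreover'' part is essentially the argument in the paper (and in \cite{Gilbert2017}): the key lemma $w(uv)\le d_{K_n-S}(u,v)$, proved by exhibiting a broken cycle with $uv$ heavy and no light edge in $S$, is exactly the right use of the light-cover characterization, and it simultaneously gives the metric property (since $w'$ coincides with the shortest-path metric of $K_n-S$) and the increase-only property. Where you diverge from the paper is in handling part~1 when $K_n-S$ is disconnected. The paper's route (generalized in Proposition~\ref{prop:verifier}) never deletes $S$: it reassigns every edge of $S$ the weight $M=\|w\|_\infty$ and then replaces every edge weight by the shortest-path distance in that reweighted (automatically connected, for $K_n$) graph. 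This is a metric for free, with the cover property needed only to show that non-$S$ edges are unchanged and that $S$-edges do not decrease for light covers; no case analysis over components is required. Your route deletes $S$ and patches components together via $w'(uv)=d_A(u,r_A)+D+d_B(r_B,v)$ with $D$ at least the largest cross-component weight. I checked the three vertex-distribution cases (all in one component; two in $A$, one in $B$; three distinct components) and the triangle inequality does hold each time, reducing to the triangle inequality of $d_A$ within a single component plus positivity of $D$; the increase-only and positivity requirements also hold since every cross-component edge lies in $S$ and receives weight at least $D$. So your construction is valid, just somewhat more labor-intensive; the paper's reweight-then-APSP trick is the cleaner way to avoid the connectivity case split, and is worth knowing since it is also what makes the $O(n^3)$ {\sc Verifier} work uniformly for general graphs.
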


\begin{theorem}\cite{frv-mvdha-18} \label{thm:specialNP} 
The problems MR$(K_n, \mathbb{R}_{\ge 0})$ and MR$(K_n, \mathbb{R})$ are APX-Complete, and moreover permit $O(OPT^{1/3})$ approximation algorithms. 
\end{theorem}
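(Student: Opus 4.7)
The plan has two independent parts: establishing APX-hardness, and designing the $O(\opt^{1/3})$-approximation algorithm.

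For APX-hardness, I would reduce from Minimum Vertex Cover on bounded-degree graphs, which is known to be APX-hard. Given a vertex-cover instance $H = (V_H, E_H)$ with $|V_H| = n$, construct a weighted $K_{n+1}$ on vertex set $V_H \cup \{s\}$ with $w(sv) = 1$ for all $v \in V_H$, $w(uv) = 3$ if $uv \in E_H$, and $w(uv) = 2$ otherwise. A short case analysis shows that the broken structures are exactly the ``spoke triangles'' $\{s,u,v\}$ with $uv \in E_H$ (since $1+1<3$ but $1+1 \not< 2$), and no longer cycle is broken, since every $k$-cycle has total weight at least $k$ while the heaviest edge is at most $3$. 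By Theorem~\ref{thm:specialStrctr}, any repair support must hit every such triangle. An exchange argument—swap any modified chord $uv$ for the spoke $sv$, which still hits the same triangles at the same cost—shows some optimum $S$ consists only of spokes, corresponding bijectively to a vertex cover of $H$. This yields an approximation-preserving reduction.

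For the $O(\opt^{1/3})$-approximation, the key structural observation is that in $K_n$ it suffices to hit every broken triangle: once a hitting set $S$ is chosen, Theorem~\ref{thm:oracle} lets us re-weight $S$ (e.g., via shortest paths in $K_n - S$) so that every triangle inequality holds, and in $K_n$ this propagates to all cycles. So the problem reduces to a 3-hitting set on the broken-triangle hypergraph. I would then run a two-regime algorithm: first greedily include every edge lying in at least $\opt^{2/3}$ broken triangles; then for the residual instance, compute a maximal edge-disjoint packing of the remaining broken triangles and include all their edges. Choosing the threshold $\opt^{2/3}$ carefully balances the two phases so that the total number of included edges is $O(\opt^{4/3}) = \opt \cdot O(\opt^{1/3})$.

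The main obstacle will be the quantitative balance in the two-regime analysis. The greedy phase requires charging each high-incidence edge against a distinct part of OPT's coverage, which is not automatic since broken triangles can share edges extensively; one must argue via a potential function or counting that each such edge ``consumes'' $\opt^{2/3}$ of OPT's hitting capacity. The packing phase in turn requires bounding the residual packing by $O(\opt^{4/3})$ using the fact that, after greedy selection, no edge lies in more than $\opt^{2/3}$ broken triangles. A secondary subtlety is verifying in the hardness reduction that weights $\{1,2,3\}$ really do not create unintended broken structures in the general ($\Omega = \R$) version; the weight-lower-bound check rules this out uniformly, but it must be performed explicitly since decreases could in principle be exploited as well as increases.
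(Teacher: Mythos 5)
Your hardness argument is essentially the reduction used in \cite{frv-mvdha-18} (an apex/suspension vertex with unit spokes, heavier weights on the edges of the vertex-cover instance, and a check that no longer cycle becomes broken), and the exchange-to-spokes step is sound; that half is fine, modulo also noting that APX-\emph{completeness} additionally requires membership in APX, which your writeup does not address.

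The approximation half, however, rests on a false structural claim. It is \emph{not} true that in $K_n$ it suffices to hit every broken triangle. Theorem \ref{thm:specialStrctr} says the support must be a regular (resp.\ light) cover of \emph{all} broken cycles, and covering all broken triangles does not imply covering all broken cycles. Concrete counterexample: in $K_n$ set $w(uv)=n+1$ and all other weights to $1$. The broken triangles are exactly $\{u,v,x\}$, and you can hit each by choosing, per $x$, either $ux$ or $vx$. If for some $x_1$ you chose $ux_1$ and for some $x_2$ you chose $vx_2$, then the $4$-cycle $u\text{-}v\text{-}x_1\text{-}x_2\text{-}u$ is broken ($n+1>3$) and contains none of your chosen edges, so by Theorem \ref{thm:specialStrctr} your set is not the support of any solution, and no re-weighting via Theorem \ref{thm:oracle} can rescue it (indeed the shortest-path re-weighting leaves the triangle $\{u,v,x_1\}$ broken). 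Lemma \ref{lem:chordal} only guarantees that \emph{some} broken chordless cycle (triangle, in $K_n$) exists whenever a broken cycle exists; it does not let you certify a cover by looking at triangles alone. A further warning sign: if the problem really were $3$-hitting-set on broken triangles, a maximal disjoint packing would already give a $3$-approximation, making the $\opt^{1/3}$ bound pointless. The actual algorithm (sketched in Appendix \ref{apnd:completeGraphs}) must first cover all broken cycles with at most $6$ (or $5$) edges, and then handle arbitrarily long ``unit cycles'' by observing that their uncovered structure forces a $4$-cycle of already-selected edges whose chords are light edges of the long cycle; the $O(\opt^{1/3})$ factor comes from bounding the chord cover of these $4$-cycles, not from a threshold/packing tradeoff on triangles. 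Your two-regime scheme as stated addresses the wrong hypergraph and cannot be repaired without this additional long-cycle machinery.
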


\section{Transitioning to Graph Metric Repair} 
In this section we generalize theorems \ref{thm:specialDecrease}, \ref{thm:specialStrctr}, and \ref{thm:oracle} to the case when $G$ is any graph, and additionally show that for general graphs \gmvid reduces to \gmvd.
Subsequently, in the later sections of paper, we provide a number of new stronger hardness and approximation results for MR$(G, \mathbb{R}_{\ge 0})$ and MR$(G, \mathbb{R})$ for general graphs, 
as well as an FPT algorithm for $\varsigma$-chordal graphs, in effect generalizing and strengthening Theorem \ref{thm:specialNP}, and answering previously unresolved questions.

For MR$(G,\mathbb{R}_{\le 0})$ we have the following generalization of Theorem \ref{thm:specialDecrease}. 
Moreover, we observe the hardness proof of \cite{frv-mvdha-18} implies if weights are allowed to increase even by a single value, the problem is APX-Complete.  
The proof of the theorem below follows fairly directly from previous work, and so has been moved to Appendix \ref{apnd:decrease}, which contains additional corollaries.

\begin{theorem}\label{thm:decreaseNew}
The problem MR$(G, \mathbb{R}_{\le 0})$ can be solved in $O(n^3)$ time. 

Moreover, the problem becomes hard if even a single positive value is allowed.  That is, if $0 \in \Omega$ and $\Omega \cap \mathbb{R}_{> 0} \neq \emptyset$ then MR$(G, \Omega)$  is APX-Complete.

\end{theorem}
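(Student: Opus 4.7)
The plan is to handle the two parts of the theorem separately, with both arguments building directly on the complete-graph case.

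For the polynomial-time algorithm for MR$(G,\mathbb{R}_{\le 0})$, I would run an all-pairs shortest path computation on $G$ in $O(n^3)$ time (e.g., Floyd--Warshall) to obtain distances $d(u,v)$ for every pair of vertices. Then for each edge $uv \in E$ with $w(uv) > d(u,v)$, I would place $uv$ into the support $S$ and reset its weight to $d(u,v)$; all other weights are left alone. To establish feasibility, I would observe that these decreases do not alter any shortest-path distance, since each new weight $d(u,v)$ already equals the length of some alternative $u$-$v$ path in $G$, so no new shorter paths are created. After the modifications every edge $uv$ satisfies $w'(uv)\le d(u,v)$, so no cycle remains broken. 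Note also that this procedure never returns NONE: it always produces a feasible decrease-only solution.

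For optimality, I would argue that any edge $uv$ with $w(uv) > d(u,v)$ must lie in the support of every decrease-only solution. The cycle formed by $uv$ together with the shortest $u$-$v$ path avoiding $uv$ is broken with heavy edge $uv$; decreasing any light edge of this cycle would only further decrease $d(u,v)$, leaving the cycle broken, so the only way to repair it is to lower $w(uv)$ itself. This matches the argument for $K_n$ in Theorem \ref{thm:specialDecrease} and extends to arbitrary $G$ essentially verbatim, yielding an $O(n^3)$ algorithm whose output has support size exactly $\opt$.

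For the APX-Completeness half, I would revisit the reduction behind Theorem \ref{thm:specialNP} from \cite{frv-mvdha-18}, which establishes APX-hardness of MR$(K_n,\mathbb{R}_{\ge 0})$. The key property to extract is that, in the constructed instances, an optimal repair modifies each chosen edge by the same canonical positive offset $\alpha$ determined by the gadget weights; hence the same hardness gap survives when the set of allowed offsets is restricted to $\{0,\alpha\}$, where $0\in\Omega$ simply encodes leaving an edge unchanged. By uniformly scaling the gadget weights, $\alpha$ can be made to match any specified positive value in $\Omega$, so the reduction transfers to every $\Omega$ with $0\in\Omega$ and $\Omega\cap\mathbb{R}_{>0}\neq\emptyset$. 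The main obstacle I expect is precisely this last step: confirming from the internal structure of the gadget in \cite{frv-mvdha-18} that a single canonical positive offset suffices for an optimal increase-only repair, rather than some adaptive mixture of values. Once that property of the gadget is verified, both halves of the theorem follow from essentially mechanical extensions of previous arguments.
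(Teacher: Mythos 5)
Your proposal is correct and follows essentially the same route as the paper: the first part is exactly the \textsc{Dmr} algorithm (reset each edge whose weight exceeds the shortest-path distance between its endpoints, and argue that every such edge is the heavy edge of a broken cycle that no decrease-only solution can repair except by decreasing that edge itself), and the second part reuses the vertex-cover reduction of Fan et~al.\ with the same scaling observation that a single positive offset $\alpha$ suffices. The verification you flag about the gadget admitting a single canonical offset is precisely the (unstated) observation the paper also relies on, so there is no substantive difference between the two arguments.
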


\subsection{Structural results}

Theorem \ref{thm:specialStrctr} suggests that the problem is mostly combinatorial in nature. We shall see that, in general, the difficult part of the problem is finding the support of an optimal solution. 
Next, we present a characterization of the support of all solutions to the graph metric repair problem, generalizing Theorems \ref{thm:specialStrctr}, \ref{thm:oracle}. It should be noted the following proof is significantly simpler than the proof of Theorem \ref{thm:specialStrctr} in \cite{frv-mvdha-18}. The key insight in the generalization is: 
\begin{enumerate}[label=(\roman*),series=insights] 
\item If the shortest path between two adjacent vertices is the not the edge connecting them, then this edge is the heavy edge of a broken cycle. \label{item:broken}
\end{enumerate}

\begin{theorem} \label{thm:structure} 
For any positively weighted graph $G = (V,E,w)$ and $S \subseteq E$, the following hold:
\begin{enumerate}[nosep]
\item $S$ is a regular cover if and only if $S$ is the support to a solution to MR$(G, \mathbb{R})$. 
\item $S$ is a light cover if and only if $S$ is the support to a solution to MR$(G, \mathbb{R}_{\ge 0})$.
\end{enumerate} 
\end{theorem}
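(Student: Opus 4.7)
The plan is to prove each equivalence separately in both directions, using the characterization that $(V,E,w')$ satisfies a metric if and only if no cycle is broken, equivalently $w'(uv) \le d_{w'}(u,v)$ for every edge $uv$, where $d_{w'}$ denotes the shortest-path distance under $w'$.

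For the forward directions (support $\Rightarrow$ cover), suppose $W$ is a solution with support $S$ and let $C$ be any broken cycle in $G$. In the regular case, if $S \cap C = \emptyset$ then every edge of $C$ retains its original weight in $w+W$, so $C$ is still broken, contradicting that $w+W$ satisfies a metric. In the light case, let $h$ be the heavy edge of $C$ and $L$ its light edges. If $L \cap S = \emptyset$, then since $W \ge 0$ we have $(w+W)(h) \ge w(h) > \sum_{e \in L} w(e) = \sum_{e \in L}(w+W)(e)$, so $C$ remains broken.

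For the reverse directions I would use a single explicit construction. Let $d^{\ast}(u,v)$ denote the shortest-path distance from $u$ to $v$ in $G-S$. Define $w' = w+W$ by $w'(e) = w(e)$ for $e \notin S$, $w'(uv) = d^{\ast}(u,v)$ for $uv \in S$ whose endpoints lie in the same component of $G-S$, and $w'(uv) = M$ for $uv \in S$ whose endpoints lie in different components, where $M$ is chosen larger than the total $w$-weight of $G$. For part 2, one first verifies $w' \ge w$: if $d^{\ast}(u,v) < w(uv)$ for some $uv \in S$ with $u,v$ connected in $G-S$, then concatenating the shortest $G-S$ path with $uv$ yields a broken cycle whose heavy edge is $uv$ and whose light edges all lie in $G-S$ -- contradicting that $S$ is a light cover. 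This invokes the key insight \ref{item:broken} in the direction that a path shorter than an adjacent edge exhibits that edge as heavy.

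The main remaining step, and the primary obstacle, is showing $w'$ satisfies a metric. Suppose for contradiction a cycle $C$ is broken in $w'$ with heavy edge $h$, so $w'(h) > \sum_{e' \in C \setminus h} w'(e')$. The choice of $M$ ensures no cross-component $S$-edge appears in $C \setminus h$, since one such edge alone would exceed the required bound. Replacing each within-component $S$-edge $e' = xy \in C \setminus h$ by a shortest $G-S$ path from $x$ to $y$ yields a walk in $G-S$ between the endpoints of $h$, and shortcutting yields a simple path $Q$ in $G-S$ of $w$-length at most $\sum_{e' \in C\setminus h} w'(e') < w'(h)$. Now case on $h$: if $h \notin S$ then $Q \cup \{h\}$ is a broken cycle entirely in $G - S$, hence disjoint from $S$, contradicting that $S$ is a regular cover (part 1) or that $S$ contains a light edge of every broken cycle (part 2, since all its light edges lie in $G-S$); if $h \in S$ with endpoints in the same component, $Q$ directly contradicts the minimality of $d^{\ast}(u,v) = w'(h)$; and if $h \in S$ with endpoints in different components, the existence of a $G-S$ walk between them is itself impossible. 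In every case we reach a contradiction, so no broken cycle exists under $w'$, completing the proof.
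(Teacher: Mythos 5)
Your proposal is correct and follows essentially the same route as the paper: the same forward argument (an uncovered broken cycle stays broken), and the same construction for the converse, reweighting each $S$-edge to its shortest-path distance in $G-S$ and cross-component $S$-edges to a large constant. The only differences are cosmetic — you take $M$ larger than the total weight where the paper uses the maximum edge weight, and your case analysis on the heavy edge of a hypothetical broken cycle (replacing $S$-edges by shortest $G-S$ paths and shortcutting to a simple path) is a somewhat more explicit version of the paper's edge-by-edge check that no edge of $G'$ can be heavy.
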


\begin{proof}
First, assume that $S$ is the support of a solution to MR$(G, \mathbb{R})$ (MR$(G, \mathbb{R}_{\ge 0})$). Suppose $C$ is a broken cycle in $G$. If $S$ does not contain any (light) edges from $C$, then changing (increasing) the weights on $S$ could not have fixed $C$. Hence, $S$ must be a regular (light) cover thus proving the ``if'' direction of both parts of the theorem.

For the ``only if'' direction, we are given a regular (light) cover $S\subseteq E$ which we use to define a graph $\hat{G} = (V,E\setminus S,w)$.
Note that since $S$ is either a regular or light cover, $S$ contains at least one edge from all broken cycles of $G$.
Thus, since $\hat{G}$ is $G$ with the edges of $S$ removed, $\hat{G}$ has no broken cycles.
Therefore, the shortest path between all adjacent vertices in $\hat{G}$ is the edge connecting them.

Now we define another graph $G'=(V,E,w')$ where $w'(e) = w(e)$ for all $e \in E \setminus S$ and for all $e \in S$, $w'(e)$ is the length of the shortest path between its end points in $\hat{G}$ or $\|w\|_\infty$ (the maximum edge weight in $\hat{G}$) if no path exists. 

To prove \textbf{1.}, it suffices to show $G'$ satisfies a metric, since $G'$ is $G$ with only weights from edges in $S$ modified.
For any edge $e \in E$, if $w'(e)$ is the shortest path between its nodes in $G'$ then $e$ is not a heavy edge in $G'$.
Therefore, edges that are in both $G'$ and $\hat{G}$ and edges that are in $G'$ whose weight was set to length of the shortest path between its end points in $\hat{G}$ are not heavy edges.  Thus, we only need to look at edges in $G'$ whose weight is $\|w\|_\infty$. These are edges that connect two disconnected components in $\hat{G}$. Thus, any cycle in $G'$ with such an edge must involve another edge between components which also has weight $\|w\|_\infty$. However, a cycle with two edges of maximum weight cannot be broken, and thus such edges cannot be heavy edges in $G'$.
Therefore, there are no heavy edges in $G'$, and so $G'$ satisfies a metric.

To prove \textbf{2.}, it now suffices to show that for all $e \in E$, we have that $w'(e) \ge w(e)$. For all $e \in E\setminus S$, we know that $w'(e) = w(e)$. Now, suppose for contradiction that for some $e \in S$, we have $w'(e) < w(e)$. Note if we set $w'(e) = \|w\|_\infty$, then we cannot have $w'(e) < w(e)$. Thus, $w'(e)$ must be the weight of the shortest path between the end points of $e$ in $\hat{G}$. Let $P$ be this shortest path in $\hat{G}$. This implies $G$ has a broken cycle $C = P \cup \{e\}$ for which $e$ is the heavy edge. Since $S$ is a light cover, it has a light edge from each broken cycle. So, $S$ must have a light edge from $C$, but then $P$ could not have existed in $\hat{G}$, a contradiction. Hence, $w'(e) \ge w(e)$ and we have an increase only solution with such a set $S$. 
\end{proof}

Furthermore, given a weighted graph $G$ and a potential support $S_W$ for a solution $W$, in $O(n^3)$ time we can determine whether there exists a valid (increase only or general) solution on that support, and if so, find one. This is a generalization of Theorem~\ref{thm:oracle}, improving upon the linear programming approach of \cite{frv-mvdha-18}. Its proof is related to the above theorem, and again uses insight \ref{item:broken}.

\begin{algorithm}[ht!]
\caption{Verifier}
\label{alg:ver}
\begin{algorithmic}[1]
\Function{Verifier}{$G=(V,E,w),S$}
\State $M = \|w\|_\infty$, $\hat{G} = (V,E, \hat{w})$
\State For each $e \in S$ set $\hat{w}(e) = M$ and for each $e \in E\setminus S$, set $\hat{w}(e) = w(e)$
\State For each $(u,v) \in E$, update $w(u,v)$ to be length of the shortest path from $u$ to $v$ in $\hat{G}$\label{line:update}
\If{Only edges in $S$ had weights changed (or increased for increase only case) }
\State \Return $w$
\Else
\State \Return NULL 
\EndIf
\EndFunction
\end{algorithmic}
\end{algorithm}

\begin{proposition}
\label{prop:verifier} 
The {\sc Verifier} algorithm, given a weighted graph $G$ and a potential support for a solution $S$, determines in $O(n^3)$ time whether there exists a valid (increase only or general) solution on that support and if so finds one. 
\end{proposition}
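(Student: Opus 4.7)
The plan is to verify the $O(n^3)$ runtime and then argue correctness in both directions. The runtime is immediate: computing all-pairs shortest paths in $\hat{G}$ via Floyd--Warshall takes $O(n^3)$, and the final pass over the edges to perform the monotonicity check takes $O(n^2)$. For correctness I would handle the general case (where $S$ must be a regular cover, by Theorem \ref{thm:structure}) and the increase-only case (where $S$ must be a light cover) in parallel, since the algorithm differs only in the final check on Line 5.

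First, suppose $S$ is a valid cover; the goal is to show the algorithm outputs a valid solution. The key observation is that every path in $\hat{G}$ that uses at least one $S$-edge has length at least $M = \|w\|_\infty \geq w(e)$ for every $e \in E$. Hence whenever $d_{\hat{G}}(u,v) < w(u,v)$, the witnessing shortest path lies entirely in $E\setminus S$, and together with $(u,v)$ forms a broken cycle in $G$ whose heavy edge is $(u,v)$ (invoking insight \ref{item:broken}). If $(u,v) \in E\setminus S$, this cycle contains no $S$-edge at all, contradicting $S$ being a regular cover; if instead $(u,v) \in S$ in the increase-only setting, all light edges of the cycle lie in $E\setminus S$, contradicting $S$ being a light cover. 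Hence non-$S$ edges retain their original weight, and $S$-edges can only increase in the increase-only case.

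To see that the updated graph $G'$ satisfies a metric, note that every edge weight $w'(u,v) = d_{\hat{G}}(u,v)$ is itself a shortest-path distance in $\hat{G}$. For any edge $e=(u,v)$ in $G'$ and alternative path $P = v_0 v_1 \cdots v_k$ with $v_0 = u$, $v_k = v$, the triangle inequality in $\hat{G}$ yields
\[ w'(P) = \sum_{i=1}^k d_{\hat{G}}(v_{i-1}, v_i) \geq d_{\hat{G}}(u,v) = w'(e), \]
so $e$ is never heavy and $G'$ has no broken cycles.

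For the converse, suppose $S$ is not a valid cover; the algorithm must then return NULL. If $S$ fails to be a regular cover, some broken cycle $C$ contains no $S$-edge, and its heavy edge $h \in E\setminus S$ satisfies $d_{\hat{G}}(u,v) \leq w(C\setminus h) < w(h)$, flagging a weight change on a non-$S$ edge. In the increase-only setting with $S$ a regular but not a light cover, some broken cycle $C$ has its heavy edge $h \in S$ and all light edges in $E\setminus S$; the shortest path through those light edges in $\hat{G}$ has weight strictly less than $w(h)$, so the algorithm decreases $w(h)$ and fails the monotonicity check. The main obstacle is formalizing the clean dichotomy between $\hat{G}$-paths that use $S$-edges (effectively useless because their weight is $\geq M$) and those lying entirely in $E\setminus S$; once this separation is in hand, Theorem \ref{thm:structure} translates directly into correctness of the algorithm.
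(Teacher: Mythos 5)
Your proof is correct and follows essentially the same route as the paper's: the dichotomy between $\hat{G}$-paths that touch $S$ (weight $\geq M$) and those avoiding $S$ (weight $\geq w(e)$ since $G\setminus S$ has no broken cycles), plus Theorem \ref{thm:structure} for the converse. The only difference is cosmetic --- you spell out via the triangle inequality for $d_{\hat{G}}$ why the output satisfies a metric, a step the paper asserts in one line.
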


\begin{proof}
Let $G=(V,E,w)$ be the original graph and let $M$ be the maximum edge weight from the graph $G$. The algorithm defines a new graph $\hat{G}=(V,E,\hat{w})$, with the following weights \[ \hat{w}(e)= \begin{cases} w(e) & e \not\in S \\ M & e \in S \end{cases} \] For each $e = (v_1,v_2) \in E$, line \ref{line:update} sets $w(e)$ to be the weight of the shortest path in $\hat{G}$ from $v_1$ to $v_2$. 
Thus, at the end of the algorithm $w(e)$ satisfies the shortest path metric of $\hat{G}$. As the algorithm outputs $w$ if and only if only edge weights in $S$ are modified (increased), it suffices to argue $S$ is a regular cover (light cover) if and only if only edge weights in $S$ are modified (increased).

Assume that $S$ is a regular or light cover. We argue line \ref{line:update} only updates the weights of the edges in $S$. Note that $G \setminus S$ has no broken cycles. Thus, for any $e = (v_1,v_2) \in G\setminus S$ we have that the shortest path from $v_1$ to $v_2$ must be $e$. Now consider any path $P$ from $v_1$ to $v_2$ in $\hat{G}$. If $P\cap S = \emptyset$, then $w(P) \ge w(e)$. On the other hand if $P \cap S \neq \emptyset$, then let $\tilde{e} \in P\cap S$. Then, we have that \[ w(P) \ge w(\tilde{e}) = M \ge w(e) \] 
Thus, in either case, $w(P) \ge w(e)$. Hence for all $e \in G\setminus S$ we do not change its weight. 

If $S$ is a light cover, we also need to argue that the weights only increased. Let $e =(v_1,v_2) \in S$. Let $P$ be a path of smallest weight in $\hat{G}$. Suppose $P \cap S \neq \emptyset$, then, we have that  $w(P) \ge M \ge w(e)$. Thus, in this case we could not have decreased the weight. Thus, assume that  $P \cap S = \emptyset$. If we still have that $w(P) \ge w(e)$, then we could not have decreased the weight. Thus, let us further assume that $w(P) < w(e)$. In this case, $P$ along with $e$ form a broken cycle in $G$, with $e$ as the heavy edge. But then since $S$ is a light cover, we have that $P \cap S \neq \emptyset$. Thus, we have a contradiction and this case cannot occur. Thus, if $S$ is a light cover, then we only increase the edge weights. 

Now assume $S$ is not a regular cover (light cover). Then there exists a broken cycle $C$ such that none of its (light) edges are in $S$. Thus, there is a broken cycle $C$ in $\hat{G}$. Let $e$ be the heavy edge of $C$, then on line \ref{line:update} the weight of $e$ will be decreased, and thus our algorithm will return NULL.
\end{proof}

The next theorem shows that once we know the support, the set of all possible solutions on that support is a nice space. 
 
\begin{theorem}
For any weighted graph $G$ and support $S$ we have that the set of solutions with support $S$ is a closed convex subset of $\mathbb{R}^{n \times n}$. Additionally, if $G - S$ is a connected graph or we require an upper bound on the weight of each edge, then the set of solutions is compact.
\end{theorem}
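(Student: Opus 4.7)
The plan is to realize the set of solutions with support contained in $S$ as the feasible region of a finite linear system, which makes closedness and convexity immediate, and then to argue boundedness separately in each of the two stated cases.

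First, I would write down explicitly what it means for $W$ to be a solution with support in $S$: $W$ must be symmetric with $W(e) = 0$ for every $e \notin S$, and $w + W$ must contain no broken cycle. The no-broken-cycle condition unpacks into the family of linear inequalities
\[
w(h) + W(h) \;\le\; \sum_{e \in C \setminus h}\bigl(w(e) + W(e)\bigr),
\]
one for every cycle $C$ in $G$ and every edge $h \in C$. Since $G$ is finite, this is a finite collection of non-strict linear inequalities; together with the support and symmetry equalities it carves out a polyhedron in $\mathbb{R}^{n\times n}$. As a finite intersection of closed half-spaces and hyperplanes, the solution set is automatically closed and convex.

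For compactness it remains to show the polyhedron is bounded, which reduces to bounding each coordinate $W(e)$ for $e \in S$. Suppose first that $G - S$ is connected and let $e = (u,v) \in S$. Fix any $u$-$v$ path $P$ in $G - S$; every edge of $P$ lies outside $S$ and so carries its original, fixed weight. The cycle $C_e = \{e\} \cup P$ produces two kinds of constraints from the system above: taking $h = e$ gives the upper bound
\[
W(e) \;\le\; w(P) - w(e),
\]
while taking any $h \in P$ gives a lower bound
\[
W(e) \;\ge\; 2w(h) - w(P) - w(e).
\]
Both sides are constants depending only on the fixed weights of $G - S$ and $w(e)$, so $W(e)$ lies in a bounded interval. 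In the second case, an imposed upper bound $(w+W)(e) \le B_e$ directly bounds $W(e)$ from above by $B_e - w(e)$, while the positivity convention inherited from the positively weighted input (so that $w+W$ yields valid non-negative distances) gives $W(e) \ge -w(e)$.

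The main subtle point is the lower bound in the connected case: the inequality indexed by $h \in P$ is a constraint imposed on every $W$ in the feasible region, regardless of whether $h$ actually turns out to be the heavy edge of $C_e$ for that particular $W$. Once this uniformity is recognized, compactness follows immediately from Heine-Borel, and the rest of the argument is a routine reduction to standard facts about polyhedra.
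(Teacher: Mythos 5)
Your proof is correct and follows essentially the same route as the paper's: both realize the solution set as a polyhedron cut out by finitely many linear equalities and inequalities (hence closed and convex), and both obtain boundedness by pairing each $e \in S$ with a path between its endpoints in $G - S$. You are somewhat more careful than the paper in writing the cycle inequalities for general (non-complete) graphs and in extracting the lower bound on $W(e)$ from those same inequalities, but the underlying argument is the same.
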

\begin{proof} Let $x_{ij}$ for $1 \le i,j \le n$ be our coordinates. Then the equations $x_{ij} = c_{ij}$ for $(i,j)$ not in the support and $x_{ij} \le x_{ik}+x_{kj}$ define a closed convex set. Thus, we see the first part. For the second part we just need to see that set is bounded to get compactness. If we have that $G - S$ is connected then for all $e \in S$ there is a path between end points of $e$ in $G-S$. Thus, the weight of this path is an upper bound. On the other hand 0 is always a lower bound. Thus, we get compactness if $G-S$ is connected.\end{proof}

\subsection{Reducing MR$(G, \mathbb{R}_{\ge 0})$ to MR$(G, \mathbb{R})$}

We now show that \gmvid reduces to \gmvd. In later sections, this lets us focus on \gmvd for our algorithms and \gmvid for our hardness results.  Note that whether an analogous statement holds for the previously studied complete graph case, $G=K_n$, is not known, and the following does not immediately imply this as it does not construct a complete graph.

\begin{theorem}\label{thm:gmvid_to_gmvd}
There is an approximation-preserving, polynomial-time reduction from \gmvid to \gmvd.
\end{theorem}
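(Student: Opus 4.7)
The plan is to use the structural characterization from Theorem~\ref{thm:structure}, which tells us that a subset $S \subseteq E$ is the support of a \gmvid solution if and only if $S$ is a light cover of $G$, and is the support of a \gmvd solution if and only if $S$ is a regular cover. Thus the reduction can be rephrased at a combinatorial level: starting from a light-cover instance on $G$, build a graph $G'$ in which the regular-cover problem has an optimum of the same size (up to a constant), and from whose solutions a light cover of $G$ can be efficiently recovered.

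My approach is to construct $G'$ by attaching, for each edge $e = uv \in E(G)$, a constant-size gadget: a small number of new vertices joined to $u$ and $v$ by carefully weighted edges. The gadget is designed so that (i) every broken cycle of $G$ remains a broken cycle of $G'$ with the same heavy and light edges, so the original covering constraints persist; (ii) no spurious broken cycles are created by interaction between the gadget and $G$ or between different gadgets, which is ensured by making the new edge weights sufficiently small with respect to the gap $w(h)-\sum_{e\in C\setminus h}w(e)$ of the broken cycles of $G$; and (iii) whenever a regular cover of $G'$ tries to cover a broken cycle of $G$ by including $e$ in its heavy role, the gadget introduces a companion broken cycle forcing the inclusion of an extra edge, erasing any saving over a light-cover strategy. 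Weights will be chosen with polynomial bit-precision.

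Given the construction, I will establish the reduction in two directions. For \emph{completeness}, any light cover $S$ of $G$ extends to a regular cover of $G'$ by adding at most a constant number of gadget edges, giving $\text{OPT}_{\gmvd}(G') = O(\text{OPT}_{\gmvid}(G))$. For \emph{soundness}, any regular cover $S'$ of $G'$ can be transformed in polynomial time into a light cover of $G$ of size at most $|S'|$ by restricting to $E(G)$ and, when $S'$ contains a heavy edge $e \in E(G)$, swapping $e$ for the companion gadget edge that is forced into $S'$. Proposition~\ref{prop:verifier} then converts the resulting light cover support into an explicit weight modification, completing the translation.

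The main obstacle is the precise design of the gadget: it must simultaneously preserve the broken-cycle structure of $G$, introduce constraints that penalize the heavy-edge usage that \gmvid forbids but \gmvd permits, and not inflate the optimum beyond a constant factor. The crux of correctness is a careful case analysis of how a regular cover of $G'$ intersects each gadget and of which edges can serve as heavy edges of the broken cycles in $G'$; here the characterization in Theorem~\ref{thm:structure} and the shortest-path argument behind insight~\ref{item:broken} are the key tools. Polynomial running time and approximation preservation follow directly from the $O(|E|)$-size construction and the size-preserving nature of the translation.
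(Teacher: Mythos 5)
Your high-level framing matches the paper's: use Theorem~\ref{thm:structure} to recast the reduction as ``turn a minimum light-cover instance into a minimum regular-cover instance'' by attaching gadgets at heavy edges. But the decisive quantitative idea is missing, and the \emph{constant-size} gadget you propose provably cannot work. In \gmvd the only leverage a gadget has is cardinality: every broken cycle may be covered by \emph{any} of its edges, so to penalize a regular cover that covers a broken cycle $C$ of $G$ solely through its heavy edge $h=(s,t)$, the gadget must create many edge-disjoint copies of $C$ with $h$ replaced by gadget paths, each demanding its own gadget edge. A single heavy edge $h$ can be the heavy edge of $k$ broken cycles whose light edges are pairwise disjoint (e.g.\ $k$ internally disjoint cheap $s$--$t$ paths plus one very heavy edge $(s,t)$); a light cover then needs $k$ edges, while a regular cover of your $G'$ takes $h$ plus $O(1)$ gadget edges. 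Since $k$ can be $\Omega(|E|)$, soundness fails by an unbounded factor, and no polynomial-time recovery map can repair this. The paper's construction therefore attaches, for each heavy edge $(s_i,t_i)$, \emph{$|E|+1$} new vertices $v_{ij}$ with $w(s_i,v_{ij})=Z=1+\max_e w(e)$ and $w(t_i,v_{ij})=Z-w(s_i,t_i)$: if some broken cycle of $G$ is not light-covered, an optimal regular cover must select a distinct gadget edge for each of the $|E|+1$ replaced copies, exceeding the size $|E|$ of the trivial light cover --- a contradiction. Note also that the gadget weights are chosen \emph{large} (so that $(s_i,v_{ij})$ becomes the heavy edge of each copy while the triangle $(s_i,v_{ij}),(v_{ij},t_i),(s_i,t_i)$ is exactly tight, hence unbroken); your suggestion to make gadget weights ``sufficiently small'' relative to the deficits would fail to create the needed companion broken cycles at all.

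Separately, your soundness step --- ``restricting to $E(G)$ and swapping $e$ for the companion gadget edge forced into $S'$'' --- does not yield a light cover of $G$: a gadget edge is not an edge of $G$, and after discarding $h$ the broken cycles of $G$ for which $h$ was the only selected edge contain no selected edge, light or otherwise. The correct argument is not a local exchange but a global counting contradiction showing that an \emph{optimal} regular cover of $G'$ must already be a light cover of $G$. (The completeness direction is also simpler than you suggest: a light cover of $G$ is already, with no added edges, a regular cover of $G'$, since every broken cycle of $G'$ contains the light edges of the broken cycle of $G$ it came from.)
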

\begin{proof}
Let $G=(V,E,w)$ be an instance of \gmvid. 
Find the set $H=\{(s_1,t_1),\ldots, \allowbreak (s_{|H|},t_{|H|})\}$ of heavy edges of all broken cycles by comparing the weight of each edge to the shortest path distance between its endpoints. 
We now construct an instance, $G'=(V',E',w)$, of \gmvd. 
For all $1\leq i\leq |H|$ and $1\leq j\leq |E|+1$, let $Q = \{v_{ij}\}_{i,j}$ be a vertex set, and let $F_l = \{ (s_i, v_{ij}) \}_{i,j}$ and $F_r = \{ (t_i, v_{ij}) \}_{i,j}$ be edge sets.
Let $V' = V \cup Q$ and $E'=E \cup F_l \cup F_r$, where all $(s_i, v_{ij})$ edges in $F_l$ have weight $Z=1 + \max_{e\in E} w(e)$, 
and for any $i$ all $(t_i, v_{ij})$ edges in $F_r$ have weight $Z - w((s_i, t_i))$.

Let $C$ be any broken cycle in $G$ with heavy edge $(s_i,t_i)$ for some $i$. 
First, observe that the cycle $C'=(C\setminus (s_i,t_i))\cup\{(s_i, v_{ij}), (t_i, v_{ij})\}$ is a broken cycle with heavy edge $(s_i, v_{ij})$, for any $j$.
To see this, note that $w((s_i, v_{ij}))=Z=w((t_i, v_{ij}))+w((s_i, t_i))$. Thus since $C$ is broken, 
\[
w((s_i, v_{ij}))=w((t_i, v_{ij}))+w((s_i, t_i)) > w((t_i, v_{ij}))+ w(C\setminus (s_i,t_i)),
\]
and thus by definition $C'$ is broken with heavy edge $(s_i, v_{ij})$.
Hence each broken cycle $C$ in $G$, with heavy edge $(s_i,t_i)$, corresponds to $|E|+2$ broken cycles in $G'$, 
namely, $C$ itself and the cycles obtained by replacing $(s_i, t_{i})$ with a pair $(s_i, v_{ij}),(t_i, v_{ij})$, for any $j$.

We now show the converse, that any broken cycle $C'$ in $G'$ is either also a broken cycle $C$ in $G$, 
or obtained from a broken cycle $C$ in $G$ by replacing $(s_i, t_{i})$ with $(s_i, v_{ij}),(t_i, v_{ij})$ for some $j$. 
First, observe that for any $i$, any cycle containing the edge $(s_i,v_{ij})$ must also contain the edge $(t_i,v_{ij})$, 
and moreover, if a cycle containing such a pair is broken, then its heavy edge must be $(s_i,v_{ij})$ as $w((s_i,v_{ij}))=Z$.
Similarly, any cycle containing more than one of these pairs of edges (over all $i$ and $j$) is not broken, since such cycles then would contain at least two edges with the maximum edge weight $Z$.
So let $C'$ be any broken cycle containing exactly one such $(s_i,v_{ij})$, $(t_i,v_{ij})$ pair. 
Note that $C'$ cannot be the cycle $((s_i,v_{ij}),(t_i,v_{ij}),(s_i,t_i))$, as this cycle is not broken because $w((s_i, v_{ij}))=w((t_i, v_{ij}))+w((s_i, t_i))$.
Thus, $C=C'\setminus\{(s_i,v_{ij}),(t_i,v_{ij})\} \cup \{(s_i,t_i)\}$ is a cycle, and $C'$ being broken implies $C$ is broken with heavy edge $(s_i,t_i)$, 
implying the claim. This holds since 
\[
w(s_i,t_i) = w((s_i, v_{ij})) - w((t_i, v_{ij})) > w(C'\setminus (s_i, v_{ij})) - w((t_i, v_{ij})) = 
w(C\setminus (s_i,t_i)).
\]

Now consider any optimal solution $M$ to the \gmvid instance $G$, which by Theorem \ref{thm:structure} we know is a minimum cardinality light cover of $G$. 
By the above, we know that $M$ is also a light cover of $G'$, and hence is also a regular cover of $G'$.
Thus by Theorem \ref{thm:structure}, $M$ is a valid solution to the \gmvd instance.  
Conversely, consider any optimal solution $M'$ to the \gmvd instance $G'$, which by Theorem \ref{thm:structure} is a minimum cardinality regular cover of $G'$.
The claim is that $M'$ is also a light cover of $G$, and hence is a valid solution to the \gmvid instance. 
To see this, observe that since all broken cycles in $G$ are broken cycles in $G'$, $M'$ must be a regular cover of all broken cycles in $G$, 
and we now argue that it is in fact a light cover. 
Specifically, consider all the broken cycles in $G$ which have a common heavy edge $(s_i,t_i)$.  
Suppose there is some cycle in this set, call it $C$, which is not light covered by $M'$. 
As $M'$ is a regular cover for $G'$, this implies that for any $j$, the broken cycle described above
determined by removing the edge $(s_i, t_i)$ from $C$ and adding edges $(s_i,v_{ij})$ and $(t_i,v_{ij})$, 
must be covered either with $(s_i,v_{ij})$ or $(t_i,v_{ij})$.  
However, as $j$ ranges over $|E|+1$ values, and these edge pairs have distinct edges for different values of $j$, $M'$ has at least $|E|+1$ edges.
This is a clear contradiction with $M'$ being a minimum sized cover, as any light cover of $G$ is a regular cover of $G'$, and $G$ only has $|E|$ edges in total.
\end{proof}

%

\section{Hardness}\label{sec:hard}

Previously, \cite{frv-mvdha-18} gave an approximation-preserving reduction from Vertex Cover to both \mvd and \mvid.  Thus, both are APX-complete, and in particular are hard to approximate within a factor of $2-\eps$ for any $\eps>0$, assuming UGC \cite{Khot02a}. Since these hardness results were proven for complete graphs, they also immediately apply to the general problems \gmvd and \gmvid. In this section we give stronger hardness results for \gmvid and \gmvd by giving approximation-preserving reductions from \mcut and \lcut.

\begin{problem}[\mcut]
Given an undirected unweighted graph $G=(V,E)$ on $n=|V|$ vertices together with $k$ pairs of vertices $\{s_i , t_i \}^k_{i=1}$, compute a minimum size subset of edges $M\subseteq E$ whose removal disconnects all the demand pairs, i.e., in the subgraph $(V,E \setminus M)$ every $s_i$ is disconnected from its corresponding vertex $t_i$.
\end{problem} 
\cite{ChawlaKKRS06} proved that if UGC is true, then it is NP-hard to approximate \mcut within any constant factor $L>0$, and assuming a stronger version of UGC, within $\Omega(\sqrt{\log\log n})$.
(The \mcut version in \cite{ChawlaKKRS06} allowed weights, but they remark their hardness proofs extend to the unweighted case.) 

\begin{theorem}\label{thm:mcut_to_gmvid}
 There is an approximation-preserving, polynomial-time reduction from \mcut to \gmvid.
\end{theorem}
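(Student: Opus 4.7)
The plan is a gadget-based reduction. Starting from a \mcut instance $(G=(V,E),\{(s_i,t_i)\}_{i=1}^k)$, I would build a weighted graph $G'$ by keeping $V$ and $E$ with unit edge weights, and for each pair $(s_i,t_i)$ and each $j\in\{1,\ldots,|E|+1\}$ introducing a fresh degree-$2$ vertex $x_{i,j}$ joined to $s_i$ by an edge of weight $W:=|V|+1$ and to $t_i$ by an edge of weight $1$. The intent is that each cycle $P\cup\{(s_i,x_{i,j}),(x_{i,j},t_i)\}$ arising from a simple $s_i$-$t_i$ path $P$ in $G$ becomes broken with heavy edge $(s_i,x_{i,j})$, so that by Theorem \ref{thm:structure} solving \gmvid on $G'$ is the same as finding a minimum light cover over these cycles.

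The next step is to classify the broken cycles of $G'$. Cycles inside $G$ have only unit weights and cannot be broken; since each $x_{i,j}$ has degree $2$, any cycle through it uses both of its incident edges, and any cycle touching two or more auxiliary vertices contains two weight-$W$ edges and so cannot be broken either. This leaves only the cycles $C_{i,j,P}=P\cup\{(s_i,x_{i,j}),(x_{i,j},t_i)\}$, and the choice $W=|V|+1>1+w(P)$ makes each such $C_{i,j,P}$ broken with heavy edge $(s_i,x_{i,j})$ and light edges $\{(x_{i,j},t_i)\}\cup P$. Any multicut $M\subseteq E$ of $G$ hits every such $P$ and is thus a light cover of $G'$ of the same size. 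In the other direction, from a light cover $S$ of $G'$ I would argue that either for every $i$ some $(x_{i,j},t_i)\notin S$, in which case covering the cycle $C_{i,j,P}$ forces $S\cap E$ to cut every $s_i$-$t_i$ path in $G$ and $M:=S\cap E$ is a multicut of size at most $|S|$; or else for some $i$ all $|E|+1$ auxiliary edges $(x_{i,j},t_i)$ lie in $S$, already forcing $|S|\geq|E|+1$, at which point $S$ may be replaced by the trivial multicut $E$ at no cost. Either way we extract a multicut of size at most $|S|$, so optima coincide and the reduction is approximation-preserving.

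The main obstacle will be ruling out spurious broken cycles that mix several gadgets or have unexpected heavy edges; the degree-$2$ structure of the $x_{i,j}$'s together with the uniform weight choice should handle this but demand careful case analysis. A second delicate point is justifying the size ``$|E|+1$'' of the gadget: with too few copies a light cover could cheaply cover all cycles of some pair by paying for the auxiliary $(x_{i,j},t_i)$ edges alone, so we need enough copies that this escape route costs strictly more than the worst possible multicut, namely $E$ itself.
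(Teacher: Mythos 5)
Your reduction is correct, but it is a genuinely different construction from the paper's. The paper simply adds the demand edge $(s_i,t_i)$ itself to $E'$ with weight $n=|V|$ (after observing one may assume $(s_i,t_i)\notin E$, since any such edge must be in every multicut). Then every broken cycle has a weight-$n$ demand edge as its heavy edge and \emph{only original unit edges as light edges}, so light covers of $G'$ are literally in bijection with multicuts of $G$ and no further case analysis is needed. Your gadget replaces each demand edge by a fresh degree-$2$ vertex $x_{i,j}$ with one heavy edge $(s_i,x_{i,j})$ and one light edge $(x_{i,j},t_i)$, replicated $|E|+1$ times per pair --- essentially importing the multiplicity trick the paper itself uses in the reduction from \gmvid to \gmvd (Theorem \ref{thm:gmvid_to_gmvd}). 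The price is that your gadget introduces spurious light edges $(x_{i,j},t_i)$, which is exactly why you need the replication and the two-case argument (``some copy's auxiliary edge is uncovered'' versus ``all $|E|+1$ copies are paid for, so $|S|\ge|E|+1$ and the trivial multicut $E$ wins''); your case analysis is sound, your classification of broken cycles via the degree-$2$ vertices and the two-heavy-edge argument is correct, and the weight $W=|V|+1$ does dominate $1+w(P)\le|V|$ for any simple path $P$. What your version buys is that it never touches existing edges, so the preprocessing step for pairs with $(s_i,t_i)\in E$ disappears; what it costs is a polynomial blow-up of $O(k|E|)$ extra vertices and a less transparent correspondence between covers and cuts. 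Both reductions are approximation-preserving for the same reason: optima coincide, and any feasible \gmvid support of size $\rho\cdot\opt$ yields a multicut of size at most $\rho\cdot\opt$.
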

\begin{proof}
Let $G=(V,E)$ be an instance of \mcut with $k$ pairs of vertices $\{s_i , t_i \}^k_{i=1}$. First, if $(s_i,t_i)\in E$ for any $i$, then that edge must be included in the solution $M$.  
Thus, we can assume no such edges exists in the \mcut instance, as assuming this can only make it harder to approximate the optimum value of the \mcut instance. 
We now construct an instance of \gmvid, $G'=(V',E',w)$. Let $V'=V$ and $E'=E\cup \{s_i , t_i \}^k_{i=1}$ where the edges in $E$ have weight one and the edges $(s_i, t_i)$, for all $i\in [k]$, have weight $n=|V|$.

Observe that if a cycle in $G'$ has exactly one edge of weight $n$, then it must be broken since there can be at most $n-1$ other edges in the cycle. Conversely, if a cycle $C$ has no edge with weight $n$ or more than one edge with weight $n$, then $C$ does not have a heavy edge, and so is not broken. 

Note that the edges from $G$ are exactly the weight one edges in $G'$, and thus, the paths in $G$ are in one-to-one correspondence with the paths in $G'$ which consist of only weight one edges. 
Moreover, the weight $n$ edges in $G'$ are in one-to-correspondence with the $(s_i,t_i)$ pairs from $G$. 
Thus, the cycles in $G'$ with exactly one weight $n$ edge followed by paths of all weight one edges connecting their endpoints, which by the above are exactly the set of broken cycles, are in one-to-one correspondence with paths between $(s_i,t_i)$ pairs from $G$.
Therefore, a minimum cardinality subset of edges which light cover all broken cycles, i.e., an optimal \gmvid support, corresponds to a minimum cardinality subset of edges from $E$ which cover all paths from $s_i$ to $t_i$ for all $i$, i.e., an optimal solution to \mcut. 
\end{proof}

\begin{problem}[\lcut]
Given a value $L$ and an undirected unweighted graph $G=(V,E)$ with source $s$ and sink $t$, find a minimum size subset of edges $M\subseteq E$ such that no $s$-$t$-path of length less than or equal to $L$ remains in the graph after removing the edges in $M$.
\end{problem}
An instance of \lcut with length $L$, is referred to as an instance of $L$-\lcut.
For any fixed $L$, Lee~\cite{l-ihcifp-17} showed that it is hard to approximate $L$-\lcut within a factor of $\Omega(\sqrt{L})$. 

\begin{theorem}
\label{thm:lbcut_to_gmvid}
 For any fixed value $L$, there is an approximation-preserving, polynomial-time reduction from $L$-\lcut to \gmvid.
\end{theorem}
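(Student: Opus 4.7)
The plan is to mimic the construction in Theorem~\ref{thm:mcut_to_gmvid}, but calibrate the weight of the newly added edge so that broken cycles correspond exactly to short $s$-$t$ paths. Given an $L$-\lcut instance $G=(V,E)$ with source $s$ and sink $t$, I would first assume without loss of generality that $(s,t)\notin E$ (otherwise this edge lies in every valid \lcut solution and can be removed in a preprocessing step). I would then construct the \gmvid instance $G'=(V,E\cup\{(s,t)\},w)$, assigning weight $1$ to every edge of $E$ and weight $L+1$ to the new edge $(s,t)$.

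The first key step is to characterize the broken cycles of $G'$. Any cycle consisting solely of edges from $E$ has unit weights, so its heaviest edge has weight $1$ while the remaining edges sum to at least $2$; hence it is not broken. Any remaining cycle has the form $C=\{(s,t)\}\cup P$, where $P$ is an $s$-$t$ path of unit edges in $G$, and such a cycle is broken if and only if $L+1>|P|$, equivalently $|P|\le L$. In every broken cycle the edge $(s,t)$ is the heavy edge, and the light edges are precisely the edges along the corresponding short $s$-$t$ path.

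The second step is to invoke Theorem~\ref{thm:structure}, which tells us an optimum \gmvid solution has support equal to a minimum-cardinality light cover of $G'$. Combined with the characterization above, a subset $S\subseteq E\cup\{(s,t)\}$ is a light cover if and only if $S\cap E$ meets every $s$-$t$ path of $G$ of length at most $L$; moreover, the edge $(s,t)$ itself can always be discarded from $S$ without loss since it is heavy in every broken cycle. Hence minimum-cardinality light covers of $G'$ are in size-preserving bijection with optimum $L$-\lcut solutions of $G$. Since the transformation runs in polynomial time, and any $\alpha$-approximate \gmvid support $A$ yields an $\alpha$-approximate $L$-\lcut solution $A\cap E$ of at most the same size (and conversely every $L$-\lcut solution is itself a valid light cover of $G'$), the reduction is approximation-preserving.

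There is no deep obstacle here; the only point requiring care is the choice of weight $L+1$ for the new edge. Any strictly smaller weight would leave some length-$L$ paths uncovered by broken cycles, while any strictly larger weight would manufacture broken cycles corresponding to $s$-$t$ paths that $L$-\lcut is not required to cut. Once the weight is chosen correctly, Theorem~\ref{thm:structure} does all of the heavy lifting in matching \gmvid supports with light covers, and the combinatorial correspondence between light covers of $G'$ and \lcut solutions of $G$ follows immediately from the cycle characterization.
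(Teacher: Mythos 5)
Your proposal is correct and follows essentially the same route as the paper: the identical construction (add $(s,t)$ with weight $L+1$, unit weights elsewhere), the same characterization of broken cycles as exactly the cycles formed by $(s,t)$ plus an $s$-$t$ path of length at most $L$, and the same appeal to Theorem~\ref{thm:structure} to equate optimal \gmvid supports with minimum light covers. Your explicit remark that the heavy edge $(s,t)$ can always be discarded from a light cover is a small clarification the paper leaves implicit, but the argument is otherwise the same.
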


\begin{proof}
Let $G=(V,E)$ be an instance of $L$-\lcut with source $s$ and sink $t$.
First, if $(s,t)\in E$, then that edge must be included in the solution $M$.  
Thus we can assume that edge is not in the \lcut instance, as assuming this can only make it harder to approximate the optimum value of the \lcut instance. 
We now construct an instance of \gmvid, $G'=(V',E',w)$. Let $V'=V$ and $E'=E\cup \{(s , t)\}$ where the edges in $E$ have weight $1$ and the edge $(s, t)$ has weight $L+1$.

First, observe that any cycle containing the edge $(s,t)$ followed by $\leq L$ unit weight edges is broken, as the sum of the unit weight edges will be $<L+1=w((s,t))$.
Conversely, any broken cycle must contain the edge $(s,t)$ followed by $\leq L$ unit weight edges.  
Specifically, if a cycle does not contain $(s, t)$ then it is unbroken since all edges would then have weight $1$.
Moreover, if a cycle contains $(s,t)$ and $>L$ other edges, then the total sum of those unit edges will be $\geq L+1 = w((s,t))$.

Note that the edges from $G$ are exactly the weight one edges in $G'$, and thus the paths in $G$ are in one-to-one correspondence with the paths in $G'$ which consist of only weight one edges. 
Moreover, the edge $(s, t)$ in $G'$ corresponds with the source and sink from $G$. 
Thus by the above, the broken cycles in $G'$ are in one-to-one correspondence with $s$-$t$-paths with length $\leq L$ in $G$.
Therefore, a minimum cardinality subset of edges which light cover all broken cycles, i.e., an optimal support to \gmvid, 
corresponds to a minimum cardinality subset of edges from $E$ which cover all paths from $s$ to $t$ of length $\leq L$, i.e., an optimal solution to \lcut. 
\end{proof}

In both the reductions from \gmvid to \gmvd of Theorem \ref{thm:gmvid_to_gmvd} and from $L$-\lcut to \gmvid of Theorem \ref{thm:lbcut_to_gmvid}, the maximum edge weight increases by $1$. Moreover, in the reduction from $L$-\lcut to \gmvid all but one edge (the $s,t$ pair) has unit weight.
%
%
Thus, based on these reductions, and previous hardness results, we have the following summarizing theorem.

\begin{theorem}\label{thm:last_hardness}
\gmvid and \gmvd are APX-complete, and moreover assuming UGC neither can be approximated within any constant factor. 

For any positive integer $L$, consider the problem defined by the restriction of \gmvd to integer weight instances with maximum edge weight $L$ and minimum edge weight 1, or the further restriction of \gmvid to instances where all weights are $1$ except for a single weight $L$ edge. 
Then assuming UGC these problems are hard to approximate within a factor of $\Omega(\sqrt{L})$.
\end{theorem}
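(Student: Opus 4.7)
The plan is to observe that this theorem is essentially a bookkeeping statement that chains together the reductions already established in this section together with known hardness results for the source problems. Since all reductions are approximation-preserving and polynomial-time, the composite hardness transfers along each chain.

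For the first claim, I would first handle \gmvid: membership in APX holds since the problem admits polynomial approximations (e.g.\ the $O(\opt^{1/3})$ algorithm will be generalized to graphs), while APX-hardness follows immediately from the Vertex Cover reduction of \cite{frv-mvdha-18} already applicable to $K_n$. For super-constant inapproximability, invoke the reduction from \mcut (Theorem \ref{thm:mcut_to_gmvid}) together with the result of \cite{ChawlaKKRS06} that \mcut admits no constant-factor approximation under UGC. To transfer both statements to \gmvd, apply the approximation-preserving reduction from \gmvid to \gmvd (Theorem \ref{thm:gmvid_to_gmvd}); this gives APX-completeness and inapproximability within any constant under UGC for \gmvd.

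For the second claim, the \gmvid case is the direct output of Theorem \ref{thm:lbcut_to_gmvid}: given an instance of $L'$-\lcut with $L' = L-1$, the construction produces a \gmvid instance where every edge has weight $1$ except the single edge $(s,t)$ of weight $L'+1 = L$, exactly matching the restriction in the theorem statement. Since Lee \cite{l-ihcifp-17} shows $L'$-\lcut is UGC-hard to approximate within $\Omega(\sqrt{L'}) = \Omega(\sqrt{L})$ and the reduction is approximation-preserving, we obtain $\Omega(\sqrt{L})$ inapproximability for \gmvid on this restricted class. For \gmvd, chain Theorem \ref{thm:lbcut_to_gmvid} into Theorem \ref{thm:gmvid_to_gmvd} starting from an $L''$-\lcut instance with $L'' = L-2$; the intermediate \gmvid instance has maximum edge weight $L-1$ and minimum weight $1$, and the subsequent reduction to \gmvd raises the maximum weight by exactly $1$ (as noted in the paragraph preceding the theorem), so the final instance has integer weights with maximum $L$ and minimum $1$. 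The hardness factor $\Omega(\sqrt{L-2}) = \Omega(\sqrt{L})$ is preserved.

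The only subtlety, and hence the main (modest) obstacle, is verifying that the weight-structure invariants really survive the composition: namely that integrality is preserved through Theorem \ref{thm:gmvid_to_gmvd} (which is immediate since $Z = 1 + \max_{e \in E} w(e)$ and $Z - w((s_i,t_i))$ are integers whenever the input weights are), that the minimum weight stays at $1$ (the new $F_l$ and $F_r$ edges have weights at least $1$), and that the maximum weight lands at exactly $L$ rather than $L+1$ after the chain. This just requires starting the \lcut parameter one or two below $L$ as above; the $\Omega(\sqrt{\cdot})$ bound is unaffected because $\sqrt{L - O(1)} = \Theta(\sqrt{L})$. Once these arithmetic checks are done, the theorem follows.
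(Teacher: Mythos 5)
Your proposal matches the paper's treatment: the paper gives no separate proof of this theorem, deriving it exactly as you do by chaining the Vertex Cover hardness of \cite{frv-mvdha-18}, Theorems \ref{thm:mcut_to_gmvid}, \ref{thm:lbcut_to_gmvid}, and \ref{thm:gmvid_to_gmvd} with the known hardness of \mcut and $L$-\lcut, and by noting (as in the paragraph immediately preceding the theorem statement) that each reduction raises the maximum weight by exactly one while all other weights stay at~$1$. Your explicit offsetting of the \lcut parameter by one or two is just the arithmetic the paper leaves implicit, and it is harmless since $\Omega(\sqrt{L-O(1)})=\Omega(\sqrt{L})$.
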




\section{Fixed Parameter Analysis for $\varsigma$-Chordal Graphs}
\label{sec:fpt}

Throughout, let $\varsigma$ be a fixed constant, and let $F_\varsigma$ denote the family of all $\varsigma$-chordal graphs.
In this section we provide an FPT algorithm for MR$(G,\mathbb{R})$ for any $G\in F_\varsigma$. 

By Theorem \ref{thm:structure}, we seek a minimum sized cover of all broken cycles. First, we argue below that if $G$ has a broken cycle, then it has a broken chordless cycle. This seems to imply a natural FPT algorithm for constant $\varsigma$.  Namely, find an uncovered broken chordless cycle and recursively try adding each one of its edges to our current solution.\footnote{Indeed, one might be tempted to construe this algorithmic approach as kernelization, as in typical FPT algorithms. The edges of the broken chordless cycles do form a kernel but not one whose size is bounded in our parameter. 
As a simple example to illustrate this phenomenom, take $G = K_n$, set one edge weight to $n+1$, and set all other edge weights to 1. There are $2n-3$ edges in the kernel while the optimal solution has size 1.} However, it is possible to cover all broken chordless cycles while not covering the chorded cycles.  These cycles are difficult to cover as they may be much larger than $\varsigma$, though again by Theorem \ref{thm:structure} they must be covered.

Consider an optimal solution $W$, with support $S_W$. Suppose that we have found a subset $S\subsetneq S_W$, covering all broken chordless cycles in $G$. 
Intuitively, if we add to each edge in $S$ its weight from $W$, then any remaining broken chordless cycle must be covered further, in effect revealing which edges to consider from the chorded cycles from the original graph $G$. The challenge, however, is of course that we don't know $W$ a priori. We argue that despite this one can still identify a bounded sized subset of edges containing an edge from a cycle needing to be covered further.

\begin{lemma} \label{lem:chordal} 
If $G$ has a broken cycle, then $G$ has a broken chordless cycle.
\end{lemma}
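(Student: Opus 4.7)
The plan is to prove Lemma~\ref{lem:chordal} by a minimum counterexample argument. Let $C$ be a broken cycle in $G$ of smallest length (number of edges), and let $h$ be its heavy edge. The goal is to show that $C$ must be chordless, for if it had a chord $e'$ we would produce a strictly smaller broken cycle, contradicting minimality.

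Suppose toward a contradiction that $C$ admits a chord $e'$. The chord $e'$ splits $C$ into two cycles $C_1$ and $C_2$, each strictly shorter than $C$, with $C_1\cap C_2=\{e'\}$ and $(C_1\setminus e')\cup(C_2\setminus e')=C$ (edge-wise). Without loss of generality, the heavy edge $h$ belongs to $C_1$. The argument splits into two cases according to how the weight of $e'$ compares with the weight of the $C_2$-side path.

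In the first case, assume $w(e') \leq \sum_{e\in C_2\setminus e'} w(e)$. Then $h$ is heavy in $C_1$, because
\[
\sum_{e\in C_1\setminus h} w(e) \;=\; w(e') + \sum_{e\in C_1\setminus\{e',h\}} w(e) \;\leq\; \sum_{e\in C_2\setminus e'} w(e) + \sum_{e\in C_1\setminus\{e',h\}} w(e) \;=\; \sum_{e\in C\setminus h} w(e) \;<\; w(h),
\]
so $C_1$ is a broken cycle shorter than $C$. In the second case, $w(e') > \sum_{e\in C_2\setminus e'} w(e)$ says precisely that $e'$ is a heavy edge in $C_2$, so $C_2$ itself is a broken cycle shorter than $C$. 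In either case minimality of $C$ is violated, so $C$ has no chord.

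The main (and only real) obstacle is handling the case where the chord $e'$ itself is very heavy: one might worry that after cutting $C$ into $C_1$ and $C_2$, neither piece retains enough weight contrast to be broken. The dichotomy above shows this fear is unfounded: the chord is either light enough that $C_1$ inherits brokenness from $C$, or heavy enough that it breaks $C_2$ on its own. After this, routine verification that $C_1,C_2$ are genuine cycles strictly shorter than $C$ (which follows from $e'$ being a chord, hence connecting two non-adjacent vertices of $C$) finishes the proof.
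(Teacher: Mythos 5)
Your proof is correct and follows essentially the same route as the paper: take a broken cycle with the fewest edges, split it along a chord into the piece containing the heavy edge and the piece that does not, and observe that depending on whether the chord outweighs the light piece, either the light piece is broken with the chord as heavy edge or the heavy piece inherits brokenness from $C$, contradicting minimality. Your write-up just makes explicit the inequality chain that the paper leaves implicit in its second case.
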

\begin{proof} Let $C = v_1, \hdots, v_k$ be the broken cycle in $G$ with the fewest edges, with $v_1v_k$ being the heavy edge. If $C$ is chordless, then the claim holds. 
Otherwise, this cycle has at least one chord $v_iv_j$. Now there are two paths $P_1$ and $P_2$ from $v_i$ to $v_j$ on the cycle. Let $P_1$ be the path containing the heavy edge of $C$. 
If $w(v_i, v_j) >  \sum_{e \in P_2} w(e)$, then $P_2$ together with the edge $v_iv_j$ defines a broken cycle with fewer edges than $C$. On the other hand, if $w(v_i, v_j) \leq  \sum_{e \in P_2} w(e)$ then $P_1$ together with the edge $v_iv_j$ defines a broken cycle with fewer edges than $C$.
In either case we get a contradiction as $C$ was the broken cycle with the fewest edges.
\end{proof}

Our FPT algorithm is shown in Algorithm \ref{alg:fptrecurse}. The following lemma is key to arguing correctness.

\begin{algorithm}[ht]
\caption{FPT}
\label{alg:fptrecurse}
\begin{algorithmic}[1]
\Function{F}{$G,S,k$}
	\If{$|S| = k$} \Return \Call{verifier}{$G$,$S$} \label{line:returnverifier} \EndIf
		
	\State $P = \emptyset$
	\If{there exists a broken chordless cycle $C$ such that $C \cap S = \emptyset$}  \label{line:disjoint}
		\State  $P=C$  \label{line:disjoint2}
	\Else
		\For{$s \subseteq S$ such that $|s| \le \varsigma - 1$}
			\State Let $\mathcal{C} = \{ \text{Chordless cycles } C \text{ such that }  C \cap S = s \}$ \label{line:cset}
			\State $C_1  \leftarrow \arg\min_{C \in \mathcal{C}} \sum_{e \in C \setminus s} w(e)$ \label{line:c1}
			\State $C_2  \leftarrow \arg\max_{C \in \mathcal{C}} w(h) - \sum_{e \in C \setminus (s \cup \{h\})}$, where $h$ is the max weight edge in $C \setminus s$ \label{line:c2}
			\State Add $(C_1 \cup C_2) \setminus S$ to $P$ 
		\EndFor
	\EndIf
	\For{$e  \in P$}
		\State $X$ = \Call{F}{$G,S \cup \{e\},k$}
		\If{$X \neq$ NULL} \Return $X$ \EndIf \EndFor
	\State \Return NULL
\EndFunction

$ $

\Function{FPTWrapper}{$G$}
	\For{$k = 1, 2, \hdots$}
		\State $X$ = \Call{F}{$G,\emptyset, k$}
		\If{$X \neq$ NULL} \Return $X$ \EndIf \EndFor
\EndFunction
\end{algorithmic}
\end{algorithm}

\begin{lemma} \label{lem:fptrecurse} Consider any optimal solution $W$ and its support $S_W$ to an instance of metric repair for $G = (V,E,w)\in F_\varsigma$. If $S \subsetneq S_W$, then $F(G, S, \opt)$ adds at least one edge in $S_W \setminus S$ to $P$. \end{lemma}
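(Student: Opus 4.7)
The plan is to show that the algorithm finds a broken chordless cycle ``hidden'' in the partially-modified graph $G'' := (V,E,w+W|_S)$, and then uses the extremal properties of $C_1$ and $C_2$ to force an edge of $S_W \setminus S$ into $P$. First I would observe that $G''$ cannot satisfy a metric: otherwise $S$ would itself be the support of a valid solution, contradicting optimality of $|S_W|$. Hence $G''$ has a broken cycle, and by Lemma~\ref{lem:chordal} (whose proof does not rely on positivity of weights) it has a broken chordless cycle $C$. Since $G$ and $G''$ have the same edge set, $C$ is chordless in $G$ as well. Set $s := C \cap S$. The containment $C \subseteq S$ is impossible, since then the weights of $C$ in $G''$ would coincide with its weights in the fully repaired graph $G' := (V,E,w+W)$, which is a metric; so $|s| \le |C|-1 \le \varsigma - 1$, and the loop at line~7 visits $s$. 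The line~5 branch is trivial: if some broken chordless cycle $C'$ of $G$ is disjoint from $S$, then $P = C'$, and $S_W$ being a regular cover (Theorem~\ref{thm:structure}) forces an edge of $C'$ to lie in $S_W \setminus S$.

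For the else branch, with the specific $s$ above, I would assume toward contradiction that neither $C_1$ nor $C_2$ contributes an edge of $S_W \setminus S$ to $P$. Since $s \subseteq S \subseteq S_W$ and $C_i \cap S = s$, this is equivalent to $C_1 \cap S_W = s = C_2 \cap S_W$. Then for each $e \in C_i$, its weights in $G''$ and $G'$ coincide, so both $C_1$ and $C_2$ are unbroken in $G''$. The goal is to derive a contradiction by splitting on the location of $h_C$, the heavy edge of $C$ in $G''$.

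In Case A ($h_C \in s$), I apply the unbroken condition of $C_1$ in $G''$ with $h_C \in s \subseteq C_1$ as heavy-candidate, together with the broken condition for $C$. The common $\sum_{e \in s \setminus \{h_C\}}(w(e)+W(e))$ terms cancel, yielding
\[
\sum_{e \in C_1 \setminus s} w(e) \;>\; \sum_{e \in C \setminus s} w(e),
\]
contradicting the minimality of $C_1$ on line~9. In Case B ($h_C \in C \setminus s$), edges outside $s$ retain their original weights, and a short calculation shows that replacing $h_C$ by the maximum-weight edge $h'_C$ of $C\setminus s$ in the expression $w(h) - \sum_{e \in (C\setminus s)\setminus\{h\}} w(e)$ only increases its value, by exactly $2(w(h'_C)-w(h_C)) \ge 0$. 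Combined with the rearranged broken condition
\[
w(h_C) - \sum_{e \in (C\setminus s)\setminus\{h_C\}} w(e) \;>\; \sum_{e \in s}(w(e)+W(e)),
\]
this is transported through the maximality defining $C_2$ (line~10) to give $w(h_2) - \sum_{e \in (C_2\setminus s)\setminus\{h_2\}} w(e) > \sum_{e \in s}(w(e)+W(e))$, where $h_2 := \arg\max_{e \in C_2\setminus s} w(e)$. Rearranging, this is exactly the statement that $C_2$ is broken in $G''$ with heavy edge $h_2$, contradicting that $C_2$ is unbroken in $G''$.

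The principal obstacle is Case B. The algorithm cannot see $W$, so it does not know which edge of $C$ is heavy in $G''$, and the heavy edge in $G''$ need not be the maximum-weight edge of $C$ in the original weights once the weights on $s$ have been adjusted. The resolution is the monotonicity computation above, which lets the algorithm aim at the wrong (but monotonically related) candidate $h'_C$ built from original weights alone, and still be guaranteed to witness brokenness in the invisible graph $G''$. This is why both an $\arg\min$-style cycle $C_1$ (targeting Case A) and an $\arg\max$-style cycle $C_2$ (targeting Case B) are needed.
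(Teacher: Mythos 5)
Your proof is correct and follows essentially the same route as the paper's: it introduces the auxiliary graph $G_S$ (your $G''$), extracts a chordless broken cycle via Lemma~\ref{lem:chordal}, and splits on whether its heavy edge lies in $s$, using the extremality of $C_1$ and $C_2$ in the two cases. Your monotonicity computation in Case~B is a welcome refinement: the paper simply asserts that the heavy edge of $C$ in $G_S$ is the maximum-weight edge of $C\setminus s$, whereas you show directly that substituting the algorithm's candidate $h'_C$ can only increase the deficit expression, so the argument goes through regardless.
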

\begin{proof} Consider the auxiliary graph $G_S = (V,E,\tilde{w})$, which has the same vertex and edge sets as $G$, but with the modified weight function: \[ \tilde{w} = \begin{cases} w(e) & e \not\in S \\ W(e) + w(e) & e \in S \end{cases} \]
Since $S \subsetneq S_W$, we have that $G_S$ has a broken cycle. Thus, by Lemma \ref{lem:chordal}, $G_S$ has a chordless broken cycle.
Suppose there is a chordless broken cycle in $G_S$ that is edge disjoint from $S$ (which occurs if and only if it is also broken in $G$), in which case, line \ref{line:disjoint} finds such a cycle. As this is a broken cycle, it must be covered by some edge in $S_W\setminus S$, and thus, we have added some edge in $S_W \setminus S$ to $P$.

Let us assume otherwise, that any chordless broken cycle in $G_S$ has non-empty intersection with $S$.  Let $C$ be any such chordless broken cycle with $C \cap S \neq \emptyset$.
Observe that as $C$ is broken in $G_S$, it must be that $|C\cap S|<|C|$, as otherwise it would imply $W$ was not a solution.
Thus, as $G \in F_\varsigma$, we know that $|C| \le \varsigma$, and so  $|C \cap S| < \varsigma$.
This implies in some for loop iteration, $C\in \mathcal{C}$ on line \ref{line:cset}.

Let $h$ be the heavy edge, in $G_S$, of the broken cycle $C$. We now have two cases:

\textbf{Case 1:} $h \in S$. In this case we have that \[ W(h) + w(h) > \underbrace{\sum_{e \in C \setminus S} w(e) }_{(1)}+ \sum_{e \in S} W(e) + w(e). \] On line \ref{line:c1} we found a cycle $C_1$ that minimized (1). Thus, since $C$ is broken in $G_S$, $C_1$ is also broken in $G_S$, and so must be covered by some edge in $S_W\setminus S$. Hence, we added some edge in $S_W \setminus S$ to $P$. 

\textbf{Case 2} $h \not\in S$. In this case $h$ has the maximum weight of all edges in $C\setminus s$. We have that  \[ \underbrace{w(h) - \sum_{e \in C \setminus (S \cup \{h\})} w(e)}_{(2)} > \sum_{e \in S} W(e) + w(e). \] On line \ref{line:c2} we found a cycle $C_2$ maximizing (2). Thus, if $C$ is broken in $G_S$, then $C_2$ is broken in $G_S$, and so must be covered by some edge in $S_W\setminus S$. Hence, we added some edge in $S_W \setminus S$ to $P$. 
%
\end{proof}

\begin{lemma} \label{lem:bound}Any time we call $F$, we have that $|P| \le 2\varsigma|S|^{\varsigma}$ 
\end{lemma}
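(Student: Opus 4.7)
The plan is to bound $|P|$ by case analysis on which branch of Algorithm \ref{alg:fptrecurse} actually executes to build $P$. The two branches assign to $P$ in structurally different ways, so the bound follows from a direct counting argument in each, combined with the $\varsigma$-chordal hypothesis $|C| \le \varsigma$ for every chordless cycle $C$ in $G$.

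First I would handle the branch at lines \ref{line:disjoint}--\ref{line:disjoint2}, in which $P$ is simply one chordless cycle $C$. Since $G \in F_\varsigma$, we get $|P| = |C| \le \varsigma$ immediately, which is at most $2\varsigma|S|^{\varsigma}$ as soon as $|S| \ge 1$. Then I would turn to the else branch, where $P$ is accumulated by iterating over subsets $s \subseteq S$ with $|s| \le \varsigma - 1$. For each fixed $s$, only the two cycles $C_1, C_2$ selected at lines \ref{line:c1}--\ref{line:c2} contribute, and each, being chordless, satisfies $|C_i| \le \varsigma$; so at most $2\varsigma$ edges are added to $P$ per iteration.

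It then remains to count the iterations. The number of relevant subsets is
\[
\bigl|\{s \subseteq S : |s| \le \varsigma-1\}\bigr| \;=\; \sum_{i=0}^{\varsigma-1}\binom{|S|}{i},
\]
and a routine binomial estimate bounds this by $|S|^{\varsigma-1}$ up to a factor depending only on the constant $\varsigma$. Multiplying by the per-iteration cost $2\varsigma$ and absorbing the constant into the coefficient yields $|P| \le 2\varsigma|S|^{\varsigma}$ as claimed.

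The main obstacle is the mild edge case $|S| = 0$: there the disjoint-cycle branch necessarily fires and can produce $|P|$ as large as $\varsigma$, while the stated right-hand side is $2\varsigma\cdot 0^{\varsigma} = 0$. This is a cosmetic hiccup to be resolved by reading the statement as $|P| \le 2\varsigma\max(|S|,1)^{\varsigma}$ (or by the convention $0^{\varsigma}=1$ for this bookkeeping); the ensuing recursion analysis only ever invokes the lemma for $|S| \ge 1$, so the polynomial-in-$|S|$ shape is what matters. Beyond that, the proof is a direct counting argument with no hidden subtleties.
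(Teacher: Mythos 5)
Your proof is correct and follows essentially the same counting argument as the paper: at most two chordless cycles per subset $s \subseteq S$ with $|s| \le \varsigma-1$, each contributing at most $\varsigma$ edges, multiplied by the number of such subsets, with the disjoint-cycle branch handled separately. The only difference is bookkeeping: the paper bounds the subset count directly via $\sum_{i=1}^{\varsigma - 1} \binom{|S|}{i}\leq \sum_{i=1}^{\varsigma - 1} |S|^{i} \le |S|^{\varsigma}$ so that the stated coefficient $2\varsigma$ falls out exactly, whereas your ``absorb the constant'' step leaves a $\varsigma$-dependent factor on top of $|S|^{\varsigma-1}$ and does not literally recover the bound $2\varsigma|S|^{\varsigma}$ (immaterial for the FPT analysis); your observation about the $|S|=0$ edge case is legitimate and is glossed over in the paper's proof as well.
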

\begin{proof} Note $|P|$ is upper bounded by $\varsigma$ multiplied by the number of chordless cycles we add. If the conditional on line \ref{line:disjoint} is true then we add only a single chordless cycle to $P$. Otherwise, for each $s \subseteq S$ such that $|s| \le \varsigma - 1$ we find two cycles. 
There are at most
\[ \sum_{i=1}^{\varsigma - 1} \binom{|S|}{i}\leq \sum_{i=1}^{\varsigma - 1} |S|^{i} \le |S|^{\varsigma}\] many such subsets, and thus we add at most $2|S|^{\varsigma}$ many cycles, implying the claim. 
\end{proof}


\begin{theorem}For any fixed constant $\varsigma$, Algorithm \ref{alg:fptrecurse} is an FPT algorithm for MR$(G,\mathbb{R})$ for any $G \in F_\varsigma$, when parameterized by \opt. The running time is $\Theta((2\varsigma\opt^{\varsigma})^{\opt+1} n^\varsigma)$. \end{theorem}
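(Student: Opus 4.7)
The plan is to combine Lemmas \ref{lem:chordal}, \ref{lem:fptrecurse}, and \ref{lem:bound} to establish correctness by induction on the recursion depth, and then convert the recursion tree bound from Lemma \ref{lem:bound} into the stated running time.

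For correctness, I would first argue that the wrapper's outer loop eventually terminates on $k=\opt$. For any $k<\opt$, no subset $S$ of size $k$ is the support of a valid solution (by definition of \opt), so line \ref{line:returnverifier} always returns NULL via Proposition \ref{prop:verifier}, and hence $F(G,\emptyset,k)$ returns NULL. Thus the wrapper proceeds to $k=\opt$ without producing a false positive. For $k=\opt$ I would prove by induction on $\opt-|S|$ the following invariant: if $S\subseteq S_W$ for some optimal support $S_W$, then $F(G,S,\opt)$ returns a valid solution. The base case $|S|=\opt$ forces $S=S_W$, so $S$ is a regular cover and the verifier returns a solution by Proposition \ref{prop:verifier}. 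For the inductive step, Lemma \ref{lem:fptrecurse} guarantees that the set $P$ built inside $F(G,S,\opt)$ contains at least one edge $e\in S_W\setminus S$; the inductive hypothesis applied to the recursive call on $S\cup\{e\}$ then yields a valid solution, which propagates upward. Applying the invariant to $S=\emptyset$ (which is trivially contained in every optimal support) shows $F(G,\emptyset,\opt)$ returns an optimal solution.

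For the running time, I would analyze the recursion tree rooted at $F(G,\emptyset,\opt)$. Lemma \ref{lem:bound} bounds the branching factor at any internal node by $2\varsigma|S|^{\varsigma}\leq 2\varsigma\,\opt^{\varsigma}$, and the recursion depth is at most $\opt+1$ (since $|S|$ strictly increases along any branch until it reaches $\opt$). Hence the total number of recursive invocations is $O\bigl((2\varsigma\,\opt^{\varsigma})^{\opt+1}\bigr)$. The per-call work is dominated by enumerating chordless cycles: because $G\in F_\varsigma$, every chordless cycle has at most $\varsigma$ vertices, so all chordless cycles can be listed by iterating over ordered $\leq\varsigma$-tuples of vertices and testing the constant number of adjacency conditions, giving $O(n^\varsigma)$ time. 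Given this list, checking whether any chordless cycle is broken and disjoint from $S$ (line \ref{line:disjoint}), partitioning the cycles by intersection with $S$ (line \ref{line:cset}), and extracting the extremal cycles $C_1,C_2$ (lines \ref{line:c1}--\ref{line:c2}) can all be done with one additional scan per subset $s\subseteq S$; summed over the at most $|S|^\varsigma\leq \opt^\varsigma$ subsets this is absorbed into the branching-factor bound (or, equivalently, into an additional $O(n^\varsigma)$ factor). The verifier invocation at the leaves costs $O(n^3)$ by Proposition \ref{prop:verifier}, which is likewise dominated. Multiplying the tree size by the per-node cost yields the claimed $\Theta\bigl((2\varsigma\,\opt^{\varsigma})^{\opt+1}n^\varsigma\bigr)$ bound, and since $\varsigma$ is a fixed constant the exponent of $n$ is constant, so this is FPT when parameterized by $\opt$.

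The main obstacle is the correctness inductive step, since Lemma \ref{lem:fptrecurse} does not by itself give a constructive handle on which branch to follow — it only guarantees $P\cap(S_W\setminus S)\neq\emptyset$. The argument must therefore be phrased existentially: along some sibling of the recursion tree, $S$ grows while remaining contained in $S_W$, and the algorithm's \textbf{for}-loop is exhaustive enough to eventually explore that sibling. A secondary subtlety is that Lemma \ref{lem:fptrecurse} is applied with the specific parameter value $\opt$, so one must confirm the invariant is preserved across recursive calls (the value of $k$ passed in does not change), which is immediate from the recursion $F(G,S\cup\{e\},k)$.
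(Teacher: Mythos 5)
Your proposal is correct and follows essentially the same route as the paper: Proposition \ref{prop:verifier} handles the $k<\opt$ rounds, Lemma \ref{lem:fptrecurse} guarantees the existence of a recursion branch that stays inside a fixed optimal support $S_W$ (your explicit induction on $\opt-|S|$ is just a cleaner phrasing of the paper's ``some path in the recursion tree reaches $S_W$'' argument), and Lemma \ref{lem:bound} bounds the branching factor to give the $(2\varsigma\opt^{\varsigma})^{\opt+1}n^{\varsigma}$ running time. The only differences are minor bookkeeping in where the extra factor of $2\varsigma\opt^{\varsigma}$ enters (you charge it to tree depth, the paper charges it to per-node work), which yields the same bound.
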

\begin{proof} 
{\sc FPTWrapper} iteratively calls $F(G,\emptyset, k)$ for increasing values of $k$ until it returns a non-{\sc Null} value. 
First, we argue that while $k<\opt$, $F(G,\emptyset, k)$ will return {\sc Null}. In the initial call to $F$, we have $S=\emptyset$. $F$ then adds exactly one edge in each recursive call until $|S|=k$, at which point it returns {\sc Verifier}$(G,S)$. Thus, as $k<\opt$, by  proposition \ref{prop:verifier}, {\sc NULL} is returned.


Now we argue that when $k = \opt$ an optimal solution is returned. 
Fix any optimal solution $W$ and its support $S_W$ to the given instance $G$. By Lemma \ref{lem:fptrecurse}, if $S\subsetneq S_W$ (which is true initially as $S=\emptyset$) then at least one edge in $S_W\setminus S$ is added to $P$. Thus, as $F$ makes a recursive call to $F(G,S\cup\{e\},k)$ for every edge $e\in P$, in at least one recursive call an edge of $S_W$ is added to $S$. Thus there is some path in the tree of recursive calls to $F$ in which all $k=OPT$ edges from $S_W$ are added, at which point $F$ returns {\sc Verifier}$(G,S)$, which returns an optimal solution by proposition \ref{prop:verifier}. (Note this recursive call may not be reached, if a different optimal solution is found first.)


Now we consider bounding the running time. Observe that in each call to $F$, a set $P$ is constructed, and then recursive calls to $F(G,S\cup\{e\},k)$ are made for each $e\in P$. By Lemma \ref{lem:bound}, $|P| \le 2\varsigma|S|^{\varsigma}  \le 2\varsigma k^{\varsigma}$ at all times. So in the tree of all recursive calls made by any initial call to $F(G,\emptyset,k)$, the branching factor is always bounded by $2\varsigma k^{\varsigma}$, and the depth is $k$. 
Thus there are $O((2\varsigma k^{\varsigma})^{k})$ nodes in our recursion tree. 

Now we bound the time needed for each node in the recursion tree. If {\sc Verifier} is called then it takes $O(n^3)$ time by proposition \ref{prop:verifier}. Otherwise, note that there are $O(n^{\varsigma})$ chordless cycles. Thus it takes $O(\varsigma n^{\varsigma})$ time to enumerate and check them on line \ref{line:disjoint}. Similarly $|\mathcal{C}|=O(n^{\varsigma})$ on line \ref{line:cset}, and so  the run time of each iteration of the for loop is $O(\varsigma n^{\varsigma})$. There are $O(|S|^{\varsigma}) = O(k^{\varsigma})$ iterations of the for loop, thus the total time per node is $O(\varsigma k^{\varsigma} n^{\varsigma})$.

Thus the total time for each call to $F(G,\emptyset,k)$ is $O((2\varsigma k^{\varsigma})^{k} \varsigma k^{\varsigma} n^{\varsigma}) = O((2\varsigma k^{\varsigma})^{k+1} n^{\varsigma})$.
Since {\sc FPTWrapper} calls $F(G,\emptyset,k)$ for $k=1, \hdots, \opt$, the overall running time of our algorithm is 
\[ O\left( \left( \sum_{k=1}^{\opt} (2\varsigma k^{\varsigma})^{k+1} \right) \cdot n^\varsigma \right) = O((2\varsigma \opt^{\varsigma})^{\opt+1} n^\varsigma) \qedhere\] 
\end{proof}

As lemma \ref{lem:fptrecurse} holds for any optimal solution, the bound on the recursion tree size in the above proof actually bounds the number of optimal solutions.

\begin{corollary} If $G \in F_\varsigma$ then there are at most $(2\varsigma\opt^{\varsigma})^{\opt}$ subsets $S \subset E$ such that $S$ is the support of an optimal solution to MR$(G,\mathbb{R})$. 
\end{corollary}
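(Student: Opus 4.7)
The plan is to exploit the fact that Lemma \ref{lem:fptrecurse} applies to every optimal support, not merely to one fixed optimal solution chosen in advance. Consider the (conceptual) full recursion tree $T$ of the top-level call $F(G,\emptyset,\opt)$, that is, the tree one would obtain if the early return on line \ref{line:returnverifier} were ignored and every branch were followed all the way to depth $\opt$. Each node of $T$ is labeled by the current set $S$, and each leaf has $|S|=\opt$.

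Fix any optimal support $S_W$. Starting at the root, where $S=\emptyset\subsetneq S_W$, Lemma \ref{lem:fptrecurse} guarantees that the set $P$ computed there contains at least one edge $e\in S_W\setminus S$, so there is a child node with $S\cup\{e\}\subseteq S_W$. Applying the lemma inductively along this child, as long as the current $S$ is a proper subset of $S_W$, there exists some outgoing branch extending $S$ by another edge of $S_W$. Following such branches yields a root-to-leaf path in $T$ whose labels form a monotonically increasing chain inside $S_W$; after $\opt$ steps the path reaches a leaf labeled exactly $S_W$. Thus the map sending each optimal support to some leaf of $T$ realizing it is an injection from the set of optimal supports into the leaves of $T$.

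It therefore suffices to bound the number of leaves of $T$. The depth of $T$ is exactly $\opt$ by construction, and by Lemma \ref{lem:bound} the branching factor at any node with current set $S$ is at most $2\varsigma|S|^\varsigma\leq 2\varsigma\opt^\varsigma$. Hence the number of leaves is at most $(2\varsigma\opt^\varsigma)^\opt$, which upper bounds the number of optimal supports and proves the corollary.

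The one subtlety to watch is that the actual execution of Algorithm \ref{alg:fptrecurse} may short-circuit as soon as one optimal solution is found, so not every leaf of $T$ is physically visited; but this does not affect the counting argument, since the injection only needs the full conceptual tree $T$. No further combinatorial work is required, as both the injectivity (via Lemma \ref{lem:fptrecurse}) and the leaf count (via Lemma \ref{lem:bound}) have already been established in the preceding analysis.
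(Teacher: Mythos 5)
Your argument is correct and is essentially the paper's own: the paper justifies the corollary by noting that Lemma \ref{lem:fptrecurse} holds for every optimal support, so each such support is realized at some leaf of the depth-$\opt$ recursion tree, whose size is bounded by the branching factor $2\varsigma\opt^{\varsigma}$ from Lemma \ref{lem:bound} raised to the power $\opt$. Your explicit treatment of the "full conceptual tree" (ignoring early returns) is a nice clarification of a point the paper leaves implicit, but the approach is the same.
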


\begin{remark}
Using the approximation-preserving, polynomial-time reduction from \gmvid to \gmvd in Theorem \ref{thm:gmvid_to_gmvd}, the above also yields an FPT for \gmvid.  This holds since the reduction does not change the optimal solution size, nor does it change $\varsigma$ as it only adds triangles.  
Alternatively, the above algorithm can be carefully modified to consider light covering broken cycles. 
\end{remark}






\section{Approximation Algorithms}\label{sec:approx}

In this section we present approximation algorithms for \gmvid and \gmvd. 

By Theorem \ref{thm:structure}, we know the support of an optimal solution to \gmvd is a minimum cardinality regular cover of all broken cycles.  
This naturally defines a hitting set instance $(E,\mathcal{C})$, where the ground set $E$ is the edges from $G$, and $\mathcal{C}$ is the collection of the subsets of edges determined by the broken cycles. 
Unfortunately, constructing $(E,\mathcal{C})$ explicitly is infeasible as there may be an exponential number of broken cycles. 
In general just counting the number of paths in a graph is \#P-Hard \cite{v-cerp-79}, 
though it is known how to count paths of length up to roughly $O(\log n)$ using color-coding. 
(See \cite{ag-bfphfa-10} and references therein.  Also \cite{bdh-ec-18} for recent FPT algorithms.) 
Moreover, observe our situation is more convoluted as we wish to count only paths corresponding to broken cycles. 
                                        

Despite these challenges, we argue there is sufficient structure to at least roughly apply the standard greedy algorithms for hitting set.
Our first key insight, related to insight~\ref{item:broken}, is:
\begin{enumerate}[resume*=insights]
\item One can always find \emph{some} broken cycle, if one exists, by finding any edge whose weight is more than the shortest path length between its endpoints (using APSP). \label{item:set}
\end{enumerate}
In the language of hitting set, we have a polynomial time oracle, which returns an arbitrary set in $\mathcal{C}$. 
Recall the simple greedy algorithm for hitting set, which repeatedly picks an arbitrary uncovered set, and adds all its elements to the solution.
If $L=\max_{c \in \mathcal{C}}|c|$ denotes the largest set size, then this algorithm gives an $L$-approximation, as each time we take the elements of a set, we get at least one element from the optimal solution.  
Below we apply this approach to approximate \gmvd and \gmvid.

We would prefer, however, to have an oracle for the number of broken cycles that an edge $e\in E$ participates in as using such an oracle would yield an $O(\log n)$-approximation algorithm for \gmvd (regardless of the size of $L$) by running the standard greedy algorithm for hitting set which repeatedly selects the element that hits the largest number of uncovered sets.  Towards this end, we have the following key insight:
\begin{enumerate}[resume*=insights]
\item We can find the \emph{most} broken cycle (i.e., with maximum deficit) and, more importantly, count how many such maximum deficit cycles each edge is in. \label{item:deficit}
\end{enumerate}
To argue that insight \ref{item:deficit} is true, first we observe that the cycle with the largest deficit value corresponds to a shortest path. This in turn, we argue over several lemmas, allows us to quickly get a count when restricting to such cycles. 
Thus, if $\deficit$ denotes the number of distinct cycle deficit values, we can show that the above insight implies an $O(\deficit \log n)$-approximation, by breaking the problem into $\deficit$ instances of hitting set, where for each instance we can run the greedy algorithm.

\subsection{$L$-approximation}

In this section, we consider the problems defined by restricting \gmvd and \gmvid to the subset of instances where the largest number of light edges in a broken cycle is $L$. 
We present an $(L+1)$-approximation algorithm for \gmvd which runs in $O(n^3\cdot \opt)$ time, which also will imply an $L$-approximation for \gmvid with the same running time.

As mentioned above, the main idea comes from insight \ref{item:set}. 
In particular, the following algorithm, {\sc Short Path Cover (SPC)}, 
can be easily understood by viewing it as running the standard $L$-approximation for the corresponding instance $(E,\mathcal{C})$ of hitting set, where we have an oracle for finding a set $c \in \mathcal{C}$.
In the following, $\mathsf{APSP}$ is a subroutine returning a shortest path distance function $d(u,v)$, and a function $P(u,v)$ giving the set of edges along \emph{any} shortest path from $u$ to $v$.

\begin{algorithm}
\caption{Short Path Cover (SPC) for \gmvd}
\label{alg:lapx}
\begin{algorithmic}[1]
\Function{SPC}{$G=(V,E,w)$}
\State $H = (V_H=V, E_H=E, w_H=w)$
\While{True}
\State $d,P = \mathsf{APSP}(H)$ \label{line_apsp}
\If{$\exists\ e=(u,v)\in E_H$ such that $w(e) > d(u,v)$} \label{exists_edge}
\State $E_H = E_H \setminus \left ( P(u,v) \cup \{e\} \right )$ \label{edge_union}
\Else{}
\State \Return \Call{Verifier}{$G, E\setminus E_H$} \label{rcall}
\EndIf
\EndWhile
\EndFunction
\end{algorithmic}
\end{algorithm}

\begin{theorem} \label{thm:spc} 
{\sc SPC} gives an $(L+1)$-approximation for \gmvd in $O(n^3\cdot \opt)$ time. 
\end{theorem}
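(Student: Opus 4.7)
The plan is to show three things in sequence: (i) at termination the returned set is a regular cover of $G$, so \textsc{Verifier} produces a valid solution; (ii) the returned cover has size at most $(L+1)\cdot\opt$; and (iii) the number of iterations is at most $\opt$, which pins down the running time. The argument is essentially the analysis of the generic greedy $L$-approximation for hitting set, instantiated with insight~\ref{item:set} as the oracle.

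First I would verify the basic loop invariant: whenever the condition on line~\ref{exists_edge} succeeds with edge $e=(u,v)$, the set $C = P(u,v)\cup\{e\}$ is a cycle in $H$ (the path $P(u,v)$ cannot traverse $e$ because $w(e)>d(u,v)$), and because $w(e) > d_H(u,v) = \sum_{g\in P(u,v)} w(g)$, the cycle $C$ is broken in $H$ with heavy edge $e$. Since $H$ differs from $G$ only by edge removals (weights are never changed), $C$ is also a broken cycle of $G$, and the light edges of $C$ in $H$ and in $G$ coincide. By the definition of $L$, it follows that $|C|\le L+1$, so each iteration deletes at most $L+1$ edges from $E_H$.

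Next I would establish correctness. When the loop exits, no edge of $E_H$ violates $w(e)\le d_H(u,v)$, so $H$ has no broken cycle: any broken cycle in $H$ with heavy edge $f=(x,y)$ would give $w(f)>d_H(x,y)$ and thus trigger the condition. Therefore every broken cycle of $G$ must contain an edge outside $E_H$, i.e.\ $E\setminus E_H$ is a regular cover of $G$, and by Theorem~\ref{thm:structure} it is the support of some solution to \gmvd. Proposition~\ref{prop:verifier} then guarantees \textsc{Verifier} outputs a valid weight assignment in $O(n^3)$.

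Now fix any optimal solution with support $S^*$, $|S^*|=\opt$. Since $S^*$ is a regular cover of $G$ and the cycle $C$ found in any iteration is a broken cycle of $G$ that lies entirely in $E_H$, the intersection $C\cap S^*\cap E_H$ is nonempty. Consequently each iteration strictly decreases $|S^*\cap E_H|$ by at least one; once $S^*\cap E_H=\emptyset$ we have $E\setminus E_H\supseteq S^*$, so $H$ has no broken cycle and the loop exits. Hence the number of iterations is at most $\opt$, yielding $|E\setminus E_H|\le(L+1)\cdot\opt$ for the approximation ratio, and $O(n^3)$ per iteration (from one APSP call plus a linear scan for a violating edge) gives total time $O(n^3\cdot\opt)$, with the final \textsc{Verifier} call absorbed into this bound.

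There is no serious obstacle here; the only subtlety is convincing oneself that the broken cycle exhibited by the APSP-based oracle is genuinely a broken cycle of the \emph{original} graph $G$ (so that $S^*$ must cover it), which follows because weights in $H$ are never altered, only edges removed. Once that is in hand, the token-charging ``each iteration removes a fresh edge of $S^*$'' argument yields both the approximation factor and the iteration count simultaneously.
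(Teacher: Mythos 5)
Your proof is correct and follows essentially the same route as the paper's: the APSP check yields a broken cycle of $H$ that is also a broken cycle of $G$ (since only edges are removed, never reweighted), each iteration removes at most $L+1$ edges while hitting at least one edge of any optimal regular cover, and termination leaves $E\setminus E_H$ a regular cover certified by Theorem~\ref{thm:structure} and Proposition~\ref{prop:verifier}. Your version merely makes the charging argument and the termination condition slightly more explicit than the paper does.
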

\begin{proof}
First, note that if there is a broken cycle in $H$, then for some edge $e=(u,v)$, $w(e) > d(u,v)$, and moreover, in this case $P(u,v) \cup \{e\}$ is a broken cycle.
Thus, when the algorithm terminates there are no broken cycles in $H$. 
Also, for any broken cycle in $G$, if all of its edges are still in $H$, then it will be a broken cycle in $H$.
Thus, when the algorithm terminates at least one edge from each broken cycle in $G$ is in $E\setminus E_H$, 
which by Theorem \ref{thm:structure} implies $E\setminus E_H$ is a valid support.

Note that removing edges does not create any new broken cycles, thus, any broken cycle in $H$ is also a broken cycle in $G$.
Thus, the support of any optimum solution must contain at least one edge from each broken cycle in $H$ (again by Theorem~\ref{thm:structure}), 
and so every time we remove the edges of a broken cycle $P(u,v) \cup \{e\}$, we remove at least one optimum edge.
As the largest broken cycle length is $L+1$, this implies overall we get an $(L+1)$-approximation.
The same argument implies the while loop can get executed at most $\opt$ times, and as APSP takes $O(n^3)$ time via Floyd-Warshall, 
and line \ref{exists_edge} takes $O(n^2)$ time, we obtain the running time in the theorem statement.
\end{proof} 

\begin{remark}
If we modify {\sc SPC} so that in line \ref{edge_union} we only remove $P(u,v)$ from $E_H$ (rather than $P(u,v) \cup \{e\}$), 
then by the second part of Theorem~\ref{thm:structure}, the same argument implies that {\sc SPC} is an $L$-approximation for \gmvid that runs in $O(n^3\cdot \opt)$ time.
\end{remark}

\begin{remark}
 Theorem \ref{thm:last_hardness} restricts \gmvid and \gmvd to integer weight instances with max weight $L$, implying any broken cycle has at most $L$ edges.
 As this is a subset of the instances here,  
 {\sc SPC} is an $L$ or $L+1$ approximation for instances that are hard to approximate within $\Omega(\sqrt{L})$. 
\end{remark}

\subsection{$O(\deficit \log n)$-approximation}

Using insight \ref{item:deficit}, our approach is to iteratively cover cycles by decreasing deficit value, ultimately breaking the problem into multiple hitting set instances. We present the algorithm for \gmvd first and then remark on the minor change needed to apply it to \gmvid. 

%

\remove{
Let the \emph{deficit} of a cycle $C$, denoted $\delta(C)$, be equal to the weight of its heavy edge minus the sum of the weights of all other edges in $C$.
Similarly, let $\delta(G)$ denote the maximum value of $\delta(C)$ over all cycles.
Note the set of broken cycles is equivalently the set of cycles with strictly positive deficit.
}

For any pair of vertices $s,t\in V$, we write $d(s,t)$ to denote their shortest path distance in $G$, 
and $\csp(s,t)$ to denote the number of shortest paths from $s$ to $t$.
It is straightforward to show that $\csp(s,t)$ can be computed in $O(m+n)$ time given all $d(u,v)$ values have been precomputed. (See Lemma \ref{lem:pathCount} in the Appendix \ref{apnd:approx}.)

Recall that for a broken cycle $C$ with heavy edge $h$, the deficit of $C$ is $\delta(C) = w(h) - \sum_{e\in (C\setminus h)} w(e)$. 
Moreover, $\delta(G)$ denotes the maximum deficit over all cycles in $G$.
For any edge $e$, define $N_h(e,\alpha)$ to be the number of distinct broken cycles of deficit $\alpha$ whose heavy edge is $e$. 
Similarly, let $N_l(e,\alpha)$ denote the number of distinct broken cycles with deficit $\alpha$ which contain the edge $e$, but where $e$ is not the heavy edge.
While for general $\alpha$ it is not clear how to even approximate $N_l(e,\alpha)$ and $N_h(e,\alpha)$, we argue over several lemmas that when $\alpha=\delta(G)$ these values can be computed exactly.  

\begin{lemma}\label{lem:Topedge}
For any edge $e=(s,t)$, if $w(e)=d(s,t)+\delta(G)$ then $N_h(e,\delta(G))=\csp(s,t)$, and otherwise $N_h(e,\delta(G))=0$.
\end{lemma}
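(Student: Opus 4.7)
The plan is to establish a bijection between the broken cycles of deficit exactly $\delta(G)$ that have $e=(s,t)$ as heavy edge and the shortest $s$-$t$ paths of $G$, valid precisely when $w(e)=d(s,t)+\delta(G)$, and to rule out such cycles otherwise. Throughout I will use the fact, implicit in the paper's definitions, that $\delta(G)>0$ (otherwise the lemma is vacuous, since there are no broken cycles) and therefore shortest $s$-$t$ paths are simple and, as soon as $w(e)>d(s,t)$, cannot use the edge $e$ itself.

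First I would handle the ``otherwise'' clause. Suppose $C$ is a broken cycle with heavy edge $e=(s,t)$ and deficit $\delta(G)$. Then $\pi(C\setminus e)$ is a simple $s$-$t$ path of total weight $w(e)-\delta(G)$, so $w(e)-\delta(G)\geq d(s,t)$, giving $w(e)\geq d(s,t)+\delta(G)$. For the reverse inequality, I would argue by contradiction: if $w(e)>d(s,t)+\delta(G)$, then taking any shortest $s$-$t$ path $P$ (which does not contain $e$ because $w(e)>d(s,t)$) produces a simple cycle $P\cup\{e\}$ whose heavy edge must be $e$ (since $w(e)>d(s,t)\geq$ the other edge weights summed, and indeed exceeds any single other edge), and whose deficit is $w(e)-d(s,t)>\delta(G)$, contradicting the maximality of $\delta(G)$. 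Combining, the existence of any broken cycle counted by $N_h(e,\delta(G))$ forces $w(e)=d(s,t)+\delta(G)$, which gives the contrapositive $N_h(e,\delta(G))=0$ whenever $w(e)\neq d(s,t)+\delta(G)$.

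Next, under the hypothesis $w(e)=d(s,t)+\delta(G)$, I will exhibit a bijection between the cycles counted by $N_h(e,\delta(G))$ and the shortest $s$-$t$ paths in $G$. The map $C\mapsto \pi(C\setminus e)$ produces a simple $s$-$t$ path of length $w(e)-\delta(G)=d(s,t)$, hence a shortest $s$-$t$ path, and is injective because $C$ is recovered as $\pi(C\setminus e)\cup\{e\}$. In the other direction, each shortest $s$-$t$ path $P$ yields a simple cycle $P\cup\{e\}$ (simple because $\delta(G)>0$ implies $w(e)>d(s,t)$, so $P$ does not contain $e$) whose heavy edge is $e$ and whose deficit equals $w(e)-d(s,t)=\delta(G)$. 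Thus $N_h(e,\delta(G))=\csp(s,t)$.

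I do not anticipate any real obstacle: the only mildly delicate step is confirming that the candidate cycle $P\cup\{e\}$ is a genuine simple cycle and that its heavy edge is indeed $e$, both of which follow from $w(e)>d(s,t)$ together with positivity of weights. The argument hinges entirely on the identification $\delta(C)=w(e)-\mathrm{length}(C\setminus e)$ with the shortest-path lower bound on $\mathrm{length}(C\setminus e)$, and on the maximality of $\delta(G)$.
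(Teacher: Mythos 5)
Your proof is correct and follows essentially the same route as the paper's: use the maximality of $\delta(G)$ to rule out $w(e)>d(s,t)+\delta(G)$, and observe that a cycle with heavy edge $e$ attains deficit $\delta(G)$ exactly when its path part is a shortest $s$-$t$ path. You are somewhat more careful than the paper in spelling out the bijection (simplicity of $P\cup\{e\}$, injectivity) and in flagging the implicit standing assumption $\delta(G)>0$, which is indeed needed for the equality case and is guaranteed where the lemma is invoked.
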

\begin{proof}
 If $w(e)\neq d(s,t)+\delta(G)$, then as $\delta(G)$ is the maximum deficit over all cycles, it must be that $w(e)< d(s,t)+\delta(G)$, 
 which in turn implies any broken cycle with heavy edge $e$ has deficit strictly less than $\delta(G)$.
 Now suppose $w(e)= d(s,t)+\delta(G)$, and consider any path $p_{s,t}$ from $s$ to $t$ such that $e$ together with $p_{s,t}$ creates a broken cycle with heavy edge $e$. 
 If $p_{s,t}$ is a shortest path then $w(e)-w(p_{s,t}) = w(e)-d(s,t)=\delta(G)$, and otherwise $w(p_{s,t})>d(s,t)$ and so $w(e)-w(p_{s,t})<w(e)-d(s,t)=\delta(G)$.
 Thus $N_h(e,\delta(G))=\csp(s,t)$ as claimed.
\end{proof}


As $G$ is undirected, every edge $e\in E$ correspond to some unordered pair $\{a,b\}$.  
However, often we write $e=(a,b)$ as an ordered pair, 
according to some fixed arbitrary total ordering of all the vertices. 
We point this out to clarify the following statement.
\begin{lemma}\label{lem:Nonheavyedge}
Fix any edge $e=(s,t)$, and let 
$X=\{f=(a,b) \mid w(f)=d(a,s)+w(e)+d(t,b)+\delta(G) \}$, and 
$Y=\{f=(a,b) \mid w(f)=d(b,s)+w(e)+d(t,a)+\delta(G) \}$.
Then it holds that
\[
N_l(e,\delta(G))=\left(\sum_{(a,b)\in X} \csp(a,s)\cdot \csp(t,b)\right)+\left(\sum_{(a,b)\in Y} \csp(b,s)\cdot \csp(t,a)\right).
\]
\end{lemma}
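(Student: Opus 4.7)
The plan is to establish a bijection between the broken cycles of deficit $\delta(G)$ that contain $e=(s,t)$ as a non-heavy edge and the tuples (heavy edge, pair of shortest paths) enumerated by the formula. I would begin by fixing such a cycle $C$ with heavy edge $f=(a,b)$ and considering the two sub-paths $\pi_1, \pi_2$ obtained by deleting $\{e, f\}$ from $C$. Each sub-path joins one endpoint of $e$ to one endpoint of $f$, so exactly one of the following holds: either $\pi_1$ runs from $a$ to $s$ and $\pi_2$ from $t$ to $b$ (call this orientation~$1$), or $\pi_1$ runs from $a$ to $t$ and $\pi_2$ from $s$ to $b$ (orientation~$2$). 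The two sums in the formula will correspond to these two cases, with $X$ capturing orientation~$1$ and $Y$ capturing orientation~$2$.

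In orientation~$1$, my goal would be to show that $f\in X$ and that $\pi_1, \pi_2$ must themselves be shortest paths. The deficit identity $\delta(G)=\delta(C)=w(f)-w(\pi_1)-w(e)-w(\pi_2)$ together with $w(\pi_1)\geq d(a,s)$ and $w(\pi_2)\geq d(t,b)$ immediately yields $w(f)\geq d(a,s)+w(e)+d(t,b)+\delta(G)$. The real content is the reverse inequality, for which I would argue by contradiction: if $w(f)>d(a,s)+w(e)+d(t,b)+\delta(G)$, then concatenating any pair $\pi_1^*, \pi_2^*$ of shortest paths (from $a$ to $s$ and from $t$ to $b$) with $e$ and $f$ produces a closed walk whose non-$f$ portion has weight $d(a,s)+w(e)+d(t,b)<w(f)-\delta(G)$. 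Extracting a simple cycle $C'$ from this walk yields one that still contains $f$, has non-$f$ weight at most $d(a,s)+w(e)+d(t,b)$, and therefore has $f$ as its (unique) heavy edge with deficit strictly greater than $\delta(G)$, contradicting maximality. This extraction step is the main technical obstacle; the key point making it work is that $f$ appears only once in the walk, so it is retained in any simple sub-cycle, while the sum of the other edge weights can only decrease.

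The same extraction argument supplies the converse direction: for every $f\in X$ and every pair of shortest paths $\pi_1^*$ from $a$ to $s$ and $\pi_2^*$ from $t$ to $b$, the concatenation $\pi_1^* \cdot e \cdot \pi_2^* \cdot f$ is in fact a simple cycle, since any repeated vertex or reused copy of $e$ or $f$ would again allow extraction of a simple sub-cycle of deficit strictly exceeding $\delta(G)$. This simple cycle has deficit exactly $\delta(G)$ and heavy edge $f$, and distinct pairs of shortest paths yield distinct cycles. Thus orientation-$1$ cycles are in bijection with the tuples counted by $\sum_{(a,b)\in X}\csp(a,s)\cdot\csp(t,b)$, and a symmetric argument for orientation~$2$ gives $\sum_{(a,b)\in Y}\csp(b,s)\cdot\csp(t,a)$, using shortest paths from $a$ to $t$ and from $s$ to $b$. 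Since each broken cycle has a unique heavy edge (positive deficit precludes ties) and falls into exactly one of the two orientations, summing the two counts proves the lemma.
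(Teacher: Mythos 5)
Your proof follows essentially the same route as the paper's: split the cycle at $e$ and its heavy edge $f$ into two sub-paths, use maximality of $\delta(G)$ to force both sub-paths to be shortest paths, and count the two cyclic orientations via the $X$- and $Y$-sums. You are in fact more careful than the paper on two points it glosses over --- that substituting shortest paths could a priori destroy simplicity (your closed-walk extraction handles this, though ``retained in \emph{any} simple sub-cycle'' should read ``retained in \emph{some} cycle of the edge decomposition, whose non-$f$ weight can only decrease''), and the converse that every pair of shortest paths really does assemble into a simple cycle of deficit exactly $\delta(G)$, which is needed for the count to be exact rather than an upper bound.
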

\begin{proof}
Consider any broken cycle $C$ containing $e=(s,t)$, with heavy edge $f=(a,b)$ and where $\delta(C)=\delta(G)$.
Such a cycle must contain a shortest path between $a$ and $b$, as otherwise it would imply $\delta(G)>\delta(C)$.
Now if we order the vertices cyclically, then the subset of $C$'s vertices $\{a,b,s,t\}$, must appear either in the order $a,s,t,b$ or $b,s,t,a$.
In the former case, as the cycle must use shortest paths, $w(f)=d(a,s)+w(e)+d(t,b)+\delta(G)$, and the number of cycles satisfying this is $\csp(a,s)\cdot \csp(t,b)$.  
In the latter case, $w(f)=d(b,s)+w(e)+d(t,a)+\delta(G)$, and the number of cycles satisfying this is $\csp(b,s)\cdot \csp(t,a)$.
Note also that the set $X$ from the lemma statement is the set of all $f=(a,b)$ satisfying the equation in the former direction, and $Y$ is the set of all $f=(a,b)$ satisfying the equation in the later direction.
Thus summing over each relevant heavy edge in $X$ and $Y$, of the number of broken cycles of deficit $\delta(G)$ which involve that heavy edge and $e$, yields the total number of broken cycles with deficit $\delta(G)$ 
containing $e$ as a light edge.
\end{proof}

\begin{corollary}\label{cor:time} Given constant time access to $d(u,v)$ and $\csp(u,v)$ for any pair of vertices $u$ and $v$, $N_h(e,\delta(G))$ can be computed in $O(1)$ time and $N_l(e,\delta(G))$ in $O(m)$ time.
\end{corollary}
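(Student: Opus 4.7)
The plan is to invoke Lemmas \ref{lem:Topedge} and \ref{lem:Nonheavyedge} directly, verifying in each case that the closed-form expression for the count requires only constant-time lookups of $d$ and $\csp$, plus possibly a single linear scan over the edges.

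For $N_h(e,\delta(G))$ with $e=(s,t)$, I would simply evaluate the test $w(e)\stackrel{?}{=}d(s,t)+\delta(G)$, using one constant-time lookup of $d(s,t)$ (plus the precomputed value $\delta(G)$, which is just the maximum deficit observed when computing APSP). If the test fails, Lemma \ref{lem:Topedge} says the answer is $0$; if it succeeds, the answer is $\csp(s,t)$, obtained via one further constant-time query. This gives the claimed $O(1)$ bound.

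For $N_l(e,\delta(G))$, I would iterate over all $m$ edges $f=(a,b)$ in $E$ and, for each one, perform the two constant-time arithmetic comparisons prescribed by the definitions of the sets $X$ and $Y$ in Lemma \ref{lem:Nonheavyedge}; whenever $f$ lies in $X$, I would add $\csp(a,s)\cdot\csp(t,b)$ to a running total, and whenever it lies in $Y$, add $\csp(b,s)\cdot\csp(t,a)$. Each per-edge test and update uses $O(1)$ lookups of $d$ and $\csp$, so the total cost is $O(m)$.

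There is essentially no obstacle here beyond a bookkeeping consideration: since the paper fixes a total ordering on vertices so that each undirected edge has a canonical ordered representation $(a,b)$, each edge is visited once with a fixed orientation, and the two disjoint cases $X$ and $Y$ in Lemma \ref{lem:Nonheavyedge} correspond to the two cyclic arrangements of $\{a,b,s,t\}$ around the cycle. This ensures each broken cycle of deficit $\delta(G)$ containing $e$ as a light edge is counted exactly once (its heavy edge is unique because $w(h)$ strictly exceeds the sum of the other edge weights), so the sum returned is exactly $N_l(e,\delta(G))$.
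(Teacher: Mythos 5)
Your proposal is correct and follows essentially the same route as the paper: apply Lemma \ref{lem:Topedge} for the $O(1)$ test and lookup giving $N_h(e,\delta(G))$, and apply Lemma \ref{lem:Nonheavyedge} via a single linear scan over the edges, with constant-time membership tests for $X$ and $Y$ and constant-time $\csp$ lookups, giving $O(m)$ for $N_l(e,\delta(G))$. The extra remark about canonical edge orientations and unique heavy edges is a fine sanity check but is already subsumed by Lemma \ref{lem:Nonheavyedge} itself.
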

\begin{proof}
By Lemma \ref{lem:Topedge}, in constant time we can check whether $w(e)=d(s,t)+\delta(G)$, in which case set $N_h(e,\delta(G))=\csp(s,t)$, and otherwise set $N_h(e,\delta(G))=0$. 
By Lemma \ref{lem:Nonheavyedge}, we can compute $N_l(e,\delta(G))$ with a linear scan of the edges, 
where for each edge $f$ in constant time we can compute whether $w(f)=d(a,s)+w(e)+d(t,b)+\delta(G)$ and if so add $\csp(a,s)\cdot \csp(t,b)$ to the sum over $X$, 
and if $w(f)=d(b,s)+w(e)+d(t,a)+\delta(G)$ add $\csp(b,s)\cdot \csp(t,a)$ to the sum over $Y$.
\end{proof}


\begin{algorithm}[ht]
\caption{Finds a valid solution for \gmd.}
\label{alg:simp}
\begin{algorithmic}[1]
\Function{Approx}{$G=(V,E,w)$}
\State    Let $S=\emptyset$\;
   \While{True}
    \State For every pair $s,t\in V$ compute $d(s,t)$\;\label{allPairs}
    \State Compute $\delta(G) = \max_{e=(s,t)\in E} ~w(e)-d(s,t)$\;\label{deficits}
    \If{$\delta(G) = 0$}
       \Return \Call{Verifier}{$G, S$}\;\label{returnVerifier}
    \EndIf
    \State For every edge $(s,t)\in E$ compute $\csp(s,t)$\;\label{pathCounting}
    \State For every $e\in E$ compute $count(e) = N_h(e,\delta(G))+N_l(e,\delta(G))$\;\label{count} 
    \State Set $f = \arg\max_{e\in E} count(e)$\;\label{maxF}
    \State Update $S=S\cup\{f\}$ and $G=G\setminus f$\;\label{edgeRemove}
   \EndWhile
\EndFunction
\end{algorithmic}
\end{algorithm}

\begin{theorem}\label{thm:deficitMain}
 For any positive integer $\deficit$, consider the set of \gmvd instances where the number of distict deficit values is at most $\deficit$, i.e., $|\{\delta(C)\mid \text{$C$ is a cycle in $G$} \}|\leq \deficit$.
 Then Algorithm \ref{alg:simp} gives an $O((n^3+m^2)\cdot \opt\cdot \deficit\log n)$ time $O(\deficit\log n)$-approximation. 
\end{theorem}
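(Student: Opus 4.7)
The plan is to prove three claims in order: correctness of the output, the $O(\deficit\log n)$ approximation ratio, and the running time. The organizing idea, flagged by insight~\ref{item:deficit}, is to view \gmvd as a hitting set instance on $(E,\mathcal{C})$ with $\mathcal{C}$ the collection of broken cycles, and to simulate the greedy hitting set rule in at most $\deficit$ phases, one per distinct deficit value, processed from largest to smallest.

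First I would handle correctness. The while loop only exits when $\delta(G)=0$, i.e., the current graph has no broken cycle. Since line~\ref{edgeRemove} only deletes edges, no new broken cycle can appear during execution, so every broken cycle of the input graph must contain at least one edge of $S$. Hence $S$ is a regular cover, and by Theorem~\ref{thm:structure} it is the support of a valid \gmvd solution, which Proposition~\ref{prop:verifier} then extracts via {\sc Verifier}$(G,S)$ on line~\ref{returnVerifier}.

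For the approximation ratio, I would observe that $\delta(G)$ is monotonically nonincreasing across iterations (deleting edges can only destroy cycles), so partitioning iterations by the current value of $\delta(G)$ yields at most $\deficit$ maximal phases. Within a phase of common deficit $\alpha$, Lemma~\ref{lem:Topedge}, Lemma~\ref{lem:Nonheavyedge}, and Corollary~\ref{cor:time} guarantee that $count(e)=N_h(e,\alpha)+N_l(e,\alpha)$ is exactly the number of surviving deficit-$\alpha$ broken cycles containing $e$, so the algorithm is executing the classical greedy hitting set rule on the deficit-$\alpha$ cycle collection. Letting $OPT_\alpha$ be the minimum number of edges hitting these cycles, clearly $OPT_\alpha \le \opt$ since the optimal support of the original instance must hit every broken cycle. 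Applying the standard greedy hitting set analysis yields an $O(\log n)$ factor per phase, and summing over the at most $\deficit$ phases gives the $O(\deficit\log n)$-approximation bound on $|S|$.

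Finally, for the running time, each iteration spends $O(n^3)$ on APSP via Floyd--Warshall (line~\ref{allPairs}), $O(m)$ extracting $\delta(G)$ (line~\ref{deficits}), $O(m^2)$ computing the $\csp$ values across relevant pairs (line~\ref{pathCounting} combined with Lemma~\ref{lem:pathCount}), and by Corollary~\ref{cor:time} an additional $O(m)$ per edge to compute each $count(e)$, for $O(m^2)$ total (line~\ref{count}), so each iteration costs $O(n^3+m^2)$. The approximation bound caps the number of iterations (which equals $|S|$) at $O(\opt\cdot\deficit\log n)$, yielding the claimed total of $O((n^3+m^2)\cdot\opt\cdot\deficit\log n)$. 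The main obstacle I anticipate is rigorously defending the per-phase $O(\log n)$ factor: the number of surviving deficit-$\alpha$ cycles can a priori be very large, so a direct greedy bound in terms of $\log(\text{number of sets})$ would be loose; one must either exploit the shortest-path structure to bound the relevant frequency, or pass through an LP-relaxation/fractional hitter argument to express the approximation purely in terms of $n$.
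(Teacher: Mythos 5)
Your proposal follows the paper's proof essentially step for step: correctness via the regular-cover characterization of Theorem \ref{thm:structure}, the decomposition of the iterations into at most $\deficit$ consecutive phases of constant $\delta(G)$ (using that edge deletions make the deficit sequence non-increasing), the observation that within each phase the algorithm is exactly greedy hitting set on the maximum-deficit cycle collection with per-phase optimum at most $\opt$, and the same per-iteration cost accounting of $O(n^3+m^2)$. The obstacle you flag at the end is real but is not resolved in the paper either --- the paper simply invokes the ``well known'' greedy guarantee, whereas the standard bound is $O(\log N)$ with $N$ the number of sets to be hit (here, the maximum-deficit broken cycles, whose count can be superpolynomial since a vertex pair can have exponentially many shortest paths), so your instinct that an extra argument (a frequency bound exploiting the shortest-path structure, or an LP/fractional argument) is needed to get a clean $O(\log n)$ per phase is a fair criticism that applies equally to the published proof.
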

\begin{proof}
 Observe that the algorithm terminates only when $\delta(G) = 0$, i.e., only once there are no broken cycles left. 
 As no new edges are added, and weights are never modified, this implies that when the algorithm terminates it outputs a valid regular cover $S$.
 (The algorithm must terminate as every round removes an edge.)  
 Therefore, by Theorem \ref{thm:structure}, $S$ is a valid \gmvd support, and so we only need to bound its size.

 Let the edges in $S=\{s_1,\ldots,s_k\}$ be indexed in increasing order of the loop iteration in which they were selected.
 Let $G_1,\ldots, G_{k+1}$ be the corresponding sequence of graphs produced by the algorithm, where $G_i=G\setminus \{s_1,\ldots,s_{i-1}\}$.
 Note that for all $i$, $G_i=(V,E_i)$ 
 induces a corresponding instance of hitting set, $(E_i,\mathcal{C}_i)$, 
 where the ground set is the set of edges from the \gmvd instance $G_i$, and $\mathcal{C}_i = \{E_i(C)\mid \text{$C$ is a broken cycle in $G_i$}\}$ (where $E_i(C)$ is the set of edges in $C$). 
 
 Let $D=\{\delta(C)\mid \text{$C$ is a cycle in $G$} \}$, where by assumption $|D|\leq \deficit$.
 Note that any cycle $C$ in any graph $G_i$, is also a cycle in $G$. 
 Thus as we never modify edge weights, $\delta(G_1),\ldots,\delta(G_{k+1})$ is a non-increasing sequence. Moreover $X=\{\delta(G_i)\}_i\subseteq D$, and in particular $|X|\leq \deficit$.
 For a given value $\delta\in X$, let $G_{\alpha},G_{\alpha+1},\ldots,G_{\beta}$ be the subsequence of graphs with deficit $\delta$ (which is consecutive as the deficit values are non-increasing).
 Observe that for all $\alpha\leq i\leq \beta$, the edge $s_i$ is an edge from a cycle with deficit $\delta=\delta(G_i)$.  
 So for each $\alpha\leq i\leq \beta$, define a sub-instance of hitting set $(E_i', \mathcal{C}_i')$, where $E_i'$ is the set of edges in cycles of deficit $\delta$ from $G_i$, 
 and $\mathcal{C}_i'$ is the family of sets of edges from each cycle of deficit $\delta$ in $G_i$.
 
 The claim is that for the hitting set instance $(E_\alpha', \mathcal{C}_\alpha')$, that $\{s_\alpha,\ldots,s_\beta\}$ is an $O(\log n)$ approximation to the optimal solution. 
 To see this, observe that for any $\alpha\leq i\leq \beta$ in line \ref{count}, $count(e)$ is the number of times $e$ is contained in a broken cycle with deficit $\delta =\delta(G_i)$, 
 as by definition $N_h(e,\delta(G_i))$ and $N_l(e,\delta(G_i))$ count the occurrences of $e$ in such cycles as a heavy edge or light edge, respectively.
 Thus $s_i$ is the edge in $E_i'$ which hits the largest number of sets in $\mathcal{C}_i'$, 
 and moreover, $(E_{i+1}', \mathcal{C}_{i+1}')$ is the corresponding hitting set instance induced by removing $s_i$ and the sets it hit from $(E_i', \mathcal{C}_i')$. 
 Thus $\{s_\alpha,\ldots,s_\beta\}$ is the resulting output of running the standard greedy hitting set algorithm on $(E_\alpha', \mathcal{C}_\alpha')$ 
 (that repeatedly removes the element hitting the largest number of sets), and it is well known this greedy algorithm produces an $O(\log n)$ approximation.

 The bound on the size of $S$ now easily follows.  Specifically, let $I=\{i_1, i_2,\ldots,i_{|X|}\}$ be the collection of indices, where $i_j$ was the first graph considered with deficit $\delta(G_{i_j})$.  
 By the above, $S$ is the union of the $O(\log n)$-approximations to the sequence of hitting set instance $(E_{i_1}', \mathcal{C}_{i_1}'),\ldots, (E_{i_{|X|}}', \mathcal{C}_{i_{|X|}}')$. 
 In particular, note that for all $i_j$, $(E_{i_j}', \mathcal{C}_{i_j}')$ is a hitting set instance induced from the removal of a subset of edges from the initial hitting set instance $(E_1, \mathcal{C}_1)$, 
 and then further restricted to sets from cycles with a given deficit value.
 Thus the size of the optimal solution on each of these instances can only be smaller than on $(E_1, \mathcal{C}_1)$. 
 This implies that the total size of the returned set $S$ is $O(\opt\cdot |X|\log n)=O(\opt\cdot \deficit\log n)$.
 
 As for the running time, first observe that by the above, there are $O(\opt\cdot \deficit\log n)$ while loop iterations. Next, the single call to {\sc Verifier} in line \ref{returnVerifier} takes $O(n^3)$. For a given loop iteration, computing all pairwise distance in line \ref{allPairs} also takes $O(n^3)$ time using the standard Floyd-Warshall algorithm. Computing the graph deficit in line \ref{deficits} can then be done in $O(m)$ time. For any given vertex pair $s,t$, computing $\csp(s,t)$ takes $O(m+n)$ time by Lemma \ref{lem:pathCount}. Thus computing the number of shortest paths over all edges in line \ref{pathCounting} takes $O(m^2+mn)$ time.  
 For each edge $e$, by Corollary \ref{cor:time}, $count(e) = N_h(e,\delta(G)) + N_l(e,\delta(G))$ can be computed in $O(m)$ time, and thus computing all counts in line \ref{count} takes $O(m^2)$ time.
 As the remaining steps can be computed in linear time, each while loop iteration in total takes $O(n^3+mn+ m^2) = O(n^3+m^2)$ time, thus implying the running time bound over all iterations in the theorem statement. 
\end{proof}

\begin{remark}
 Rather than computing the $d(u,v)$ values from scratch in each iteration, we can use a dynamic data structure.
 This would slightly improve the time to $O(n^3+(n^{2+\alpha}+m^2)\cdot \opt\cdot \deficit\log n)$, where $0\leq \alpha$ 
 is a constant depending on the query and update time of the dynamic data structure.
 (Ignoring $\log$ factors, $\alpha=3/4$ is known. See for example \cite{iss-fdapsp-17} and references therein).
 However, similarly improving the $m^2$ term is more challenging as the $N_l(e,\delta(G))$ values depend in a non-trivial 
 way on collections of $d(u,v)$ values, each of which may or may not have changed.
\end{remark}

\begin{remark}
 If we modify Algorithm \ref{alg:simp}, so that line \ref{count} instead sets $count(e) = N_l(e,\delta(G))$, then by Theorem~\ref{thm:structure}, 
 the same argument implies we have an algorithm with the same time and approximation factor for \gmvid.
 Alternatively, we could combine the above algorithm for \gmvd with the approximation-preserving reduction from \gmvid to \gmvd of Theorem \ref{thm:gmvid_to_gmvd}. 
 However, the reduction increases the graph size by a linear factor, resulting in a slower running time.
\end{remark}


\section{Conclusion}

In this paper we introduced and gave a number of results for the graph metric repair problem.  In particular, we gave three main results.  First, we reduced from the well know \mcut problem, suggesting a possible $\Omega(\log n)$ approximation lower bound. Also, we gave a reduction from $L$-\lcut, implying an $\Omega(\sqrt{L})$ approximation lower bound.
Next, we gave an FPT algorithm for the $\varsigma$-chordal case, answering an open question from previous work, and requiring significant structural insight into the problem. The natural next question is whether it can be extended to general graphs.
Finally, we gave an $L$-approximation and a non-trivial $O(\deficit \log n)$-approximation, which given the reduction from \mcut, begs the question of whether reducing or eliminating the dependence on the deficit $\deficit$ is possible. 

\bibliographystyle{plain}
\bibliography{citations} 

\appendix

\section{The Decrease Only Case}\label{apnd:decrease}

For the problem MR$(G,\mathbb{R}_{\le 0})$, consider the following simple algorithm, used in previous works for the special case when $G = K_n$.

\begin{algorithm}[ht]
\label{alg:DMR}
\caption{Decrease Metric Repair ({\sc Dmr})}
\begin{algorithmic}[1]
\Function{DMR}{$G = (V,E,w)$}
\State Let $W = w$
\State For any edge $e=uv \in E$, set $W(e)$ =  weight of a shortest path between $u$ and $v$ 
\State \Return $W-w$
\EndFunction
\end{algorithmic}
\end{algorithm}

{\reftheorem{thm:decreaseNew} 
The problem MR$(G, \mathbb{R}_{\le 0})$ can be solved in $O(n^3)$ time by the {\sc Dmr} algorithm.

Moreover, the problem becomes hard if even a single positive value is allowed.  That is, if $0 \in \Omega$ and $\Omega \cap \mathbb{R}_{> 0} \neq \emptyset$ then MR$(G, \Omega)$  is APX-Complete.
}
\begin{proof} For the first part, let $e \in G$ be an edge whose edge weight is bigger than the shortest path between the two end points of $e$. Then in this case $e$ is the heavy edge in a broken cycle. Hence, any decrease only solution must decrease this edge. Thus all edges decreased by {\sc Dmr} are edges that must be decreased.

By the same reasoning we see that this new weighted graph has no broken cycles. Thus, we see that our algorithm gives a sparsest solution to MR($G, \mathbb{R}_{\le 0})$ in $\Theta(n^3)$ time. 

For the second part, 
the reduction is the same as that of Fan et~al.\ \cite{frv-mvdha-18}. 
However, we make the observation that for any value $\alpha > 0$, by appropriately scaling the weights of the reduction in Fan et~al.\ \cite{frv-mvdha-18}, MR$(G, \mathbb{R}_{\le 0})$ is still APX-Hard in the extreme case when $\Omega = \{0,\alpha\}$.
\end{proof}

\begin{corollary} \label{dor:decp} 
For any $G = (V,E,w)$ {\sc Dmr} returns the smallest solution for any $\ell_p$ norm for $p \in [1, \infty)$. 
\end{corollary}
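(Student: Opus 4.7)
The plan is to reduce the $\ell_p$ optimality claim, for any $p \in [1,\infty)$, to a single pointwise statement: the weights $W$ produced by {\sc Dmr} are the pointwise \emph{largest} possible among all valid decrease-only repairs. Once that is established, the corollary follows immediately: if $w'$ is the weight matrix of any other valid decrease-only solution, then $0 \leq w(e) - W(e) \leq w(e) - w'(e)$ for every edge $e$, so
\[
\|W - w\|_p^p = \sum_{e \in E} |W(e) - w(e)|^p \;\leq\; \sum_{e \in E} |w'(e) - w(e)|^p = \|w' - w\|_p^p,
\]
and taking $p$-th roots preserves the inequality.

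To prove the pointwise maximality, fix any valid decrease-only solution giving weights $w' \leq w$ (entrywise) such that $(V,E,w')$ has no broken cycle. For any edge $e = uv \in E$ and any cycle $C$ through $e$ in $G$, the no-broken-cycle condition forces $w'(e) \leq \sum_{f \in C\setminus e} w'(f) \leq \sum_{f \in C\setminus e} w(f)$, where the second inequality uses $w' \leq w$. Taking the infimum over all such cycles and combining with $w'(e) \leq w(e)$ itself shows $w'(e) \leq d_w(u,v)$, the shortest path distance between $u$ and $v$ in the original graph $G$. But this is exactly the value that {\sc Dmr} assigns to $W(e)$. Hence $w'(e) \leq W(e)$ for every $e$, which is the pointwise maximality claim.

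It only remains to observe that $W$ itself really is a valid decrease-only solution, so that the comparison above is meaningful; this is precisely the content of the first half of Theorem \ref{thm:decreaseNew} already established for {\sc Dmr}. The main subtlety (and essentially the only one) is to be careful that the shortest-path bound must be taken with respect to original weights $w$, not $w'$ — but this is handled by the monotonicity $w' \leq w$ which passes through the sum over any cycle. No further work is needed.
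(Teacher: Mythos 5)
Your proof is correct and follows essentially the same route as the paper's: the paper argues that {\sc Dmr} decreases each (heavy) edge ``by the minimum amount'' needed, which is exactly your pointwise-maximality claim, and the $\ell_p$ conclusion then follows termwise. Your write-up is in fact more careful than the paper's, since you explicitly justify why decreasing \emph{other} edges cannot allow a smaller decrease on $e$ (via the monotonicity $w'\le w$ passed through the cycle sums), a point the paper leaves implicit.
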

\begin{proof} The proof of Theorem \ref{thm:decreaseNew}  actually shows that there is a unique support for the sparsest solution, i.e., the set of all heavy edges. In fact any decrease only solution must contain all of these edges in its support. We can also see that {\sc Dmr} decreases these by the minimum amount so that the cycles are unbroken. Thus, this solution is in fact the smallest for any $\ell_p$ norm. 
\end{proof}

\remove{
\begin{lemma}\label{lem:NP}  
If $0 \in \Omega$ and $\Omega \cap \mathbb{R}_{> 0} \neq \emptyset$ then the problem MR$(G, \Omega)$ is APX-hard. Moreover, assuming the Unique Games Conjecture, it is hard to approximate within a factor of $2-\eps$ for any $\eps > 0$.
\end{lemma}
\begin{proof} Let $\alpha \in \Omega \cap \mathbb{R}_{> 0}$. Our goal is to take a general instance of vertex cover and reduce it to MR$(G, \Omega)$. Thus, given a graph $G$ set the weight of all edges in graph $G$ to $3\alpha$. Then, take the suspension $\nabla G$ of $G$ and make the weight of each new edge $\alpha$. At this point, this reduction is the same as presented in \cite{Raichel2018}, hence the proof of the reduction is the same and we omit it.
\end{proof}
}

\section{Counting Shortest Paths}\label{apnd:approx}

%


\begin{lemma}\label{lem:pathCount}
Let $G$ be a positively weighted graph, where for all pairs of vertices $u,v$ one has constant time access to the value $d(u,v)$.
Then for any pair of vertices $s,t$, the value $\csp(s,t)$ can be computed in $O(m+n)$ time.
\end{lemma}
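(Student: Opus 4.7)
The plan is to reduce $s$-$t$ shortest-path counting to path counting in a DAG. I would first extract from $G$ the auxiliary directed graph $D=(V,E_D)$ consisting of directed edges lying on at least one shortest $s$-$t$ path: for each undirected edge $\{u,v\}\in E$, include $u\to v$ in $E_D$ exactly when $d(s,u) + w(u,v) + d(v,t) = d(s,t)$. Using the hypothesized constant-time access to $d(\cdot,\cdot)$, this construction takes $O(m+n)$ time.

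Next I would verify two properties of $D$. First, since weights are strictly positive, any edge $u\to v \in E_D$ satisfies $d(s,u) < d(s,v)$, so $D$ is acyclic; in particular the two orientations of a single undirected edge cannot coexist in $E_D$. Second, the directed $s$-$t$ paths in $D$ are in bijection with the shortest $s$-$t$ paths in $G$: any shortest path in $G$ uses only edges satisfying the defining equation (otherwise its prefix to $u$ or suffix from $v$ could be shortened), and conversely any $s$-$t$ path in $D$ has total weight exactly $d(s,t)$ by telescoping the defining equations along its edges.

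Finally I would run the standard DAG path-count dynamic program on $D$: compute a topological ordering of $D$ in $O(m+n)$ via DFS, initialize $\csp(s,s)=1$ and $\csp(s,v)=0$ for $v\neq s$, and in topological order update $\csp(s,v) = \sum_{u:(u,v)\in E_D} \csp(s,u)$. Returning $\csp(s,t)$ gives the desired count, and every stage of the algorithm is linear, giving the $O(m+n)$ bound.

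The only real subtlety is ensuring the directed-edge construction captures shortest-path membership without spuriously double-counting an undirected edge in both orientations, which follows immediately from strict positivity of the weights. As is standard, we assume unit-cost arithmetic on the path counts, consistent with the word-RAM model in which the distance lookups are also constant time.
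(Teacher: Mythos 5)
Your proposal is correct and follows essentially the same approach as the paper: orient the edges that lie on shortest paths into a DAG (acyclic by strict positivity of the weights) and count paths by dynamic programming in topological order. The paper phrases it as the recurrence $\csp(v_i,t)=\sum_{v_j\in X_i}\csp(v_j,t)$ over neighbors satisfying $w(v_i,v_j)+d(v_j,t)=d(v_i,t)$, processed in order of increasing distance to $t$, which is the same computation.
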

\begin{proof}
Let $V=\{v_1,v_2,v_3,...,v_n\}$, and let $N(v_i)$ denote the set of neighbors of $v_i$. 
Define  $X_i = \{ v_j\in N(v_i)\mid w(v_i,v_j)+d(v_j,t)=d(v_i,t)\}$, that is, $X_i$ is the set of neighbors of $v_i$ where there is a shortest path from $t$ to $v_i$ passing through that neighbor.
Thus we have,
\[
\csp(v_i,t)=\sum_{v_j\in X_i} \csp(v_j,t).
\]
Note that any shortest path from $v_i$ to $t$ can only use vertices $v_j$ which are closer to $t$ than $v_i$. 
So consider a topological ordering of the vertices, where edges are conceptually oriented from smaller to larger $d(v_i,t)$ values.
Thus if we compute the $\csp(v_i,t)$ values in increasing order of the index $i$, then each $\csp(v_i,t)$ value can be computed in time proportional to the degree of $v_i$, 
and so the overall running time is $O(m+n)$.
\end{proof}

\section{Improved Analysis for Complete Graphs}\label{apnd:completeGraphs}

Here we consider the special case when $G=K_n$, improving parts of the analysis from \cite{frv-mvdha-18,Gilbert2017}.  
First, we consider the $O(\opt^{1/3})$-approximation algorithm of \cite{frv-mvdha-18}, which works for both MR$(K_n,\mathbb{R})$ and MR$(K_n,\mathbb{R}_{\geq 0})$.
The running time of this algorithm is $\Theta(n^6)$, since at some point it enumerates all cycles of length $\leq 6$.
With a more careful analysis, we observe it suffices to consider cycles of length $\leq 5$, improving the running time to $\Theta(n^5)$.  For MR$(K_n,\mathbb{R}_{\geq 0})$ we consider a simple, appealing algorithm with good empirical performance from \cite{Gilbert2017}, referred to as {\sc IOMR-fixed}.  We prove that unfortunately it is an $\Omega(n)$ approximation. 

\subsection{5 Cycle Cover}\label{sec:5cyclecover}
Here we argue the running time of the $O(\opt^{1/3})$-approximation algorithm of \cite{frv-mvdha-18}, which works for both MR$(K_n,\mathbb{R})$ and MR$(K_n,\mathbb{R}_{\geq 0})$, 
can be improved from $\Theta(n^6)$ to $\Theta(n^5)$.  
The algorithm presented in \cite{frv-mvdha-18} has 3 major steps. The first two steps are used to approximate the support of the optimal solution and then the last step is actually used to find a solution given this support. We shall focus on the first 2 steps as these are where we make modifications. 

\begin{figure}[ht]
\centering
\includegraphics[width = 0.3\textwidth, height = 0.4\textwidth]{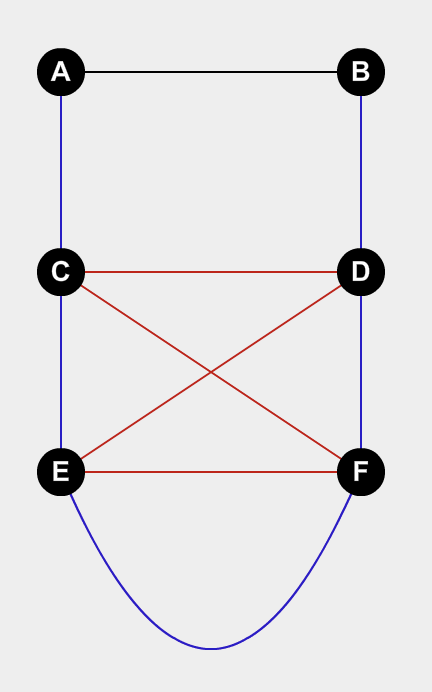} \hspace{2cm}
\includegraphics[width = 0.337\textwidth]{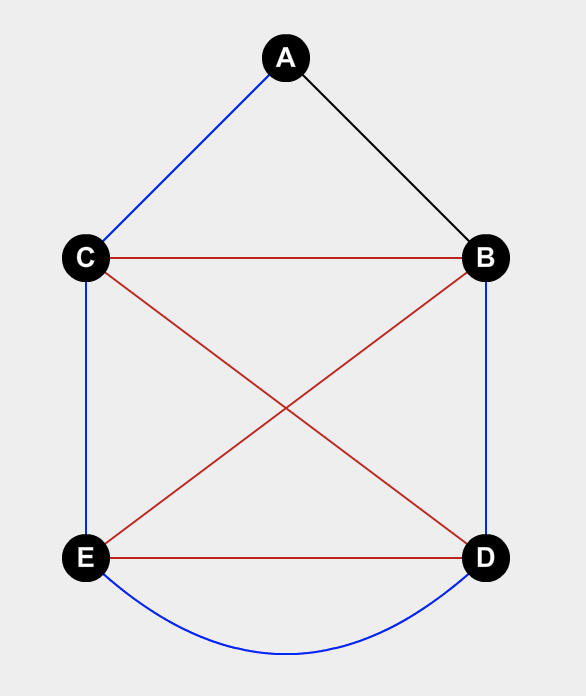}
\caption{Left: Embedding from \cite{frv-mvdha-18}. Right: Our modified embedding for a smaller cycle. Here the black edge is the heavy edge. The blue edges are the light edges and the red edges are the embedded 4 cycle. The curved blue edge indicates that there are more vertices along that path}
\label{fig:4cycle}
\end{figure}

\textbf{First Step:} In the first step,~\cite{frv-mvdha-18} find a cover for all broken cycles of length $ \le m$. In particular, the authors use the case when $m = 6$. As described in \cite{frv-mvdha-18}, we can obtain an $m-1$ approximation of the optimal cover for all broken cycles of length $ \le m$ in $O(n^m)$ time. Denote this cover by $S_{\le m}$. 

\textbf{Second Step:} For this step, we need to first define unit cycles. Given a broken cycle $C$ with heavy edge $h$, let $e$ be a chord of $C$. Then $e$ divides $C$ into 2 cycles, one that contains $h$, denoted heavy($C,e$) and one that does not contain $h$ denoted light($C,e$). We say this cycle is a unit cycle if for all chords $e$, $e$ is not the heavy edge of light($C,e$). 

From the definition of a unit cycle, a light cover of all unit cycles light covers all broken cycles. Hence, step 2 of the algorithm from \cite{frv-mvdha-18} light covers all unit cycles not covered by $S_{\le 6}$ as follows. Let $C$ be such a unit cycle. Now we know that $C$ has at least $7$ edges. Consider the red $C_4$ shown in Figure \ref{fig:4cycle}. We know that for each $e \in C_4$, we have that heavy$(C,e)$ is a broken cycle with at most 6 edges. Hence, we must have at least 1 edge in $S_{\le 6}$. But since $C$ has no light edges in $S_{\le 6}$, we must have $e \in S_{\le 6}$. Thus, we know all edges in $C_4$ are edges in $S_{\le 6}$. Moreover, observe that either chord of $C_4$ is a light edge of $C$.  Thus it suffices to compute a cover with least one chord of every four cycle from the edges in $S_{\le 6}$, a step which the authors in~\cite{frv-mvdha-18} denote $chord4(S_{\le 6})$. 

In Figure \ref{fig:4cycle}, we observe that the same 4 cycle can be embedded in a 6 cycle instead of a 7 cycle.
Thus, our modified algorithm is shown in Algorithm \ref{alg:fivecycle}.

\begin{algorithm}[ht]
\caption{5-Cycle Cover}
\label{alg:fivecycle}
\begin{algorithmic}[1]
\Function{5 Cycle Cover}{$G=(V,E,w)$}
	\State Compute a regular cover of $S_{\le 5}$ of all broken cycles with $\le 5$ edges
	\State Compute a cover $S_c = chord4(S_{\le 5})$
	\State \Return \Call{Verifier}{$G,S_c \cup S_{\le 5}$}
	\EndFunction
\end{algorithmic}
\end{algorithm}

\subsection{IOMR-fixed}\label{sec:iomr-fixed}

We will now show that {\sc IOMR-fixed} is an $\Omega(n)$ approximation algorithm. The algorithm presented in Gilbert and Jain \cite{Gilbert2017} is as follows:

\begin{algorithm}[ht]
\caption{IOMR Fixed}
\begin{algorithmic}[1]
\Require{$D \in \Sym$ }
\Function{IOMR-Fixed}{D}
\State $\hat{D} = D$
\For{$k \gets 1 \textrm{ to } n$}
\For{$i \gets 1 \textrm{ to } n$}
\State $\hat{D}_{ik} = \max(\hat{D}_{ik}, \max_{j < i}(\hat{D}_{ij}-\hat{D}_{jk}))$
\EndFor
\EndFor
\State \Return{$\hat{D}-D$}
\EndFunction
\end{algorithmic}
\end{algorithm}

. 

\begin{lemma} \label{lem:OmegaIOMR} For every $n$, there exists a weighted graph $G$ such that {\sc IOMR-Fixed} repairs $\binom{n-1}{2}$ edge weights while an optimal solutions repairs at most $(n-2)$ edge weights. 
\end{lemma}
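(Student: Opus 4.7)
The plan is to exhibit an explicit family of bad inputs. For each $n$ I would take $G = K_n$ on $V = \{1, \ldots, n\}$ with weights $D_{1k} = 2(n-k+1)$ for $k \ge 2$ (so $D_{12} > D_{13} > \cdots > D_{1n}$ with consecutive gap $2$) and $D_{ij} = 1$ for all $2 \le i \ne j \le n$. The intuition is that the heavy ``fan'' at vertex $1$ creates $\binom{n-1}{2}$ broken triangles but admits a cheap cover from vertex $1$, while the column-by-column update order of {\sc IOMR-Fixed} forces it to increase every entry of $K_{n-1}$ on $\{2, \ldots, n\}$.

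First, I would identify the broken cycles and bound the optimum. For every $2 \le i < j \le n$ the triangle $(1,i,j)$ has weights $D_{1i}, D_{1j}, 1$ with $D_{1i} - D_{1j} = 2(j-i) \ge 2 > 1$, so it is broken with heavy edge $(1,i)$; any triangle avoiding vertex $1$ has all weights $1$ and is unbroken. A direct check shows that any broken cycle in $G$ must pass through vertex $1$, and the lighter of its two vertex-$1$ edges is of the form $(1, j)$ with $j \ge 3$. Hence $S = \{(1,j) : 3 \le j \le n\}$ is a light cover of size $n-2$, and Theorem~\ref{thm:structure}(2) gives $\opt \le n-2$.

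The core step is to analyze {\sc IOMR-Fixed} by induction on the outer-loop index $k$, maintaining the invariant: (a) first-row and first-column entries retain their input values, (b) every lower-triangular entry $\hat{D}_{ij}$ with $2 \le j < i$ still equals $1$, and (c) every upper-triangular entry $\hat{D}_{i\ell}$ with $2 \le i < \ell \le k$ has been strictly increased. The inductive step is direct: when processing $\hat{D}_{ik}$ for $2 \le i < k$, the $j=1$ term of the inner maximum equals $\hat{D}_{i1} - \hat{D}_{1k} = 2(k-i) \ge 2$ by (a), exceeding $\hat{D}_{ik} = 1$ and forcing a modification; every other $j \ge 2$ term is non-positive, using (b) on $\hat{D}_{ij}$ together with the already-updated $\hat{D}_{jk} \ge 2(k-j)$ produced in earlier inner iterations of the current outer round. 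Summing over $2 \le i < k \le n$ yields exactly $\binom{n-1}{2}$ modified edges, each corresponding to a distinct edge of $K_{n-1}$ on $\{2,\ldots,n\}$.

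The main technical obstacle is verifying (a) — specifically, that outer iteration $k=1$ does not increase any first-column entry (this follows from $D_{ij} - D_{j1} = 1 - 2(n-j+1) < 0$ for $j \ge 2$) and that first-row entries are never written (because the inner maximum for $i=1$ is over the empty set). Once the invariant is in hand, combining $\binom{n-1}{2}$ {\sc IOMR-Fixed} modifications with $\opt \le n-2$ yields the $\Omega(n)$ gap stated in the lemma.
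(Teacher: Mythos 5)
Your proof is correct and takes essentially the same route as the paper's: a star at vertex $1$ with strictly monotone, pairwise-distinct weights and small uniform weights elsewhere, so that the $j=1$ term of the inner maximum forces {\sc IOMR-Fixed} to modify every entry of the $K_{n-1}$ on $\{2,\ldots,n\}$, while a solution supported on the star edges has size $O(n)$. The only differences are cosmetic refinements: you use linear rather than exponential star weights and weight $1$ rather than $0$ off the star (which actually sidesteps the fact that the paper's matrix is not strictly positive), you bound $\opt$ via the light-cover characterization of Theorem~\ref{thm:structure} instead of exhibiting explicit repaired weights, and your loop invariant makes the ``updated only when first seen'' claim precise.
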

\begin{proof} Consider a matrix $D$ where \[ D_{ij} = \begin{cases} 0 & \text{ if } i \neq 1, j \neq 1 \\ 2^i & \text{ if } j=1, i > 1 \\2^j & \text{ if } i=1, j > 1 \end{cases} \] This matrix $D$ will be the weight matrix for the input graph $K_n$. 

First, we claim that all entries of the form $D_{s1}$ will never be updated as entries will only be updated the first time they are seen. Thus 
\begin{align*} 
D_{s1} = \max(D_{s1}, \max_{t < s}(D_{s1} - D_{1t}))  
= \max(2^s, \max_{t < s} (2^s - 2^t)) 
 = 2^s 
\end{align*}

Now we just have to verify that the rest of the non-diagonal entries are updated. Let us look at the first time an entry $D_{rs}$ is updated. (Here $r < s$.) Then we have that 
\begin{align*} \hat{D}_{rs} &= \max(D_{rs}, \max_{t < s}(D_{st} - D_{tr})) 
=\max_{t < s}(D_{st} - D_{tr}) &[\text{Since }D_{rs}=0] \\
&\ge D_{s1}-D_{1r} 
= 2^s - s^r
> D_{rs}.
\end{align*} 
Thus all other non-diagonal entries will be updated the first time seen. Thus, for the solution $W = \hat{D} - D$ that {\sc IOMR-fixed} returns, we see that $W_{ij} > 0$ for exactly all $1 < i,j \le n$ and $i \neq j$. Thus, we repaired $\binom{n-1}{2}$ edge weights.  

Finally, a sparser increase only solution $W$ can be obtained as follows. For all $s>1$ we set \[ W_{1s} = W_{s1} = 2^n - D_{s1} \]  and all other entries of $W$ are $0$. This then gives us the desired result. \end{proof}

\begin{corollary} 
{\sc IOMR-fixed} is an $\Omega(n)$ approximation algorithm. 
\end{corollary}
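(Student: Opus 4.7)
The statement follows essentially immediately from Lemma \ref{lem:OmegaIOMR}, so my plan is simply to compute the approximation ratio witnessed by that construction. Specifically, Lemma \ref{lem:OmegaIOMR} exhibits, for each $n$, an instance on which \textsc{IOMR-fixed} modifies $\binom{n-1}{2}$ entries while some valid increase-only solution modifies at most $n-2$ entries; since the optimum $\opt$ can only be smaller than or equal to this latter quantity, the ratio between the algorithm's output size and $\opt$ is at least
\[
\frac{\binom{n-1}{2}}{n-2} = \frac{(n-1)(n-2)/2}{n-2} = \frac{n-1}{2}.
\]

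So the plan is: invoke Lemma \ref{lem:OmegaIOMR} to obtain the family of instances, observe that $\opt \le n-2$ on each such instance, and conclude that the approximation factor of \textsc{IOMR-fixed} on this family is at least $(n-1)/2 = \Omega(n)$. Since this holds for every $n$, the claimed lower bound on the approximation ratio follows.

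There is no real obstacle here; the work has all been done in Lemma \ref{lem:OmegaIOMR}. The only thing to be mildly careful about is that the matrix $W$ exhibited at the end of the proof of Lemma \ref{lem:OmegaIOMR} is indeed a valid increase-only solution (so that $\opt$ is upper bounded by $n-2$ and not merely by the algorithm's output). This is routine: with $W_{1s}=W_{s1}=2^n-D_{s1}$ and all other entries zero, the modified matrix has $(D+W)_{1s}=2^n$ for all $s>1$ and $(D+W)_{ij}=0$ otherwise, which trivially satisfies the triangle inequality on $K_n$ (any triangle contains either two edges of weight $2^n$ and one of weight $0$, or three edges of weight $0$). Thus $\opt \le n-2$, completing the proof.
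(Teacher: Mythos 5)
Your proof is correct and matches the paper's intent: the corollary is stated without proof as an immediate consequence of Lemma \ref{lem:OmegaIOMR}, and your computation of the ratio $\binom{n-1}{2}/(n-2) = (n-1)/2 = \Omega(n)$ is exactly the intended argument. Your extra verification that the sparse solution $W$ in the lemma's construction genuinely satisfies the (increase-only) metric condition is a sound and worthwhile sanity check of a step the paper's lemma proof asserts without detail.
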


\end{document}